\documentclass[11pt,english]{article} 
\usepackage{hyperref}
\usepackage{verbatim}
\usepackage{amsfonts}
\usepackage{amsmath,amssymb,amsthm}
\usepackage{multirow}
\usepackage{url}

\usepackage{amsfonts}
\usepackage{courier}
\usepackage{fullpage}

\usepackage{epsfig}
\usepackage[usenames]{color}
\usepackage[]{algorithm2e}
\usepackage{xspace}
\usepackage{times}
\usepackage{enumerate} %% [I]: for capital roman numbers. [a]: for small alpha-characters within brackets.

\usepackage[utf8]{inputenc}
\usepackage{ucs}
\usepackage{tikz}
\usepackage{tkz-tab}
\usepackage{caption}
\usepackage{subcaption}
\usepackage{latexsym}
\usepackage{pgfplots}
\pgfplotsset{compat=newest} 
\pgfplotsset{plot coordinates/math parser=false}

\newtheorem{theorem}{Theorem}[section]
\newtheorem{lemma}[theorem]{Lemma}

\newtheorem{fact}[theorem]{Fact}

\newtheorem{definition}[theorem]{Definition}

%%\def\qsym{\vrule width0.6ex height1em depth0ex}
%%\newcount\proofqeded
%%\newcount\proofended
%%\def\qed{ \mbox{\ \vrule width1ex height1em depth0cm}
%%\global\advance\proofqeded by 1 }
%%\newenvironment{proof}{\proofstart}{\ifnum\proofqeded=\proofended\qed\fi \global\advance\proofended by 1
%%  \medskip}
%%\makeatletter
%%\def\proofstart{\@ifnextchar[{\@oprf}{\@nprf}}
%%\def\@oprf[#1]{\paragraph{Proof #1:~}}
%%\def\@nprf{\paragraph{Proof:~}}
%%\makeatother

\renewcommand{\paragraph}[1]{\noindent {\bf #1}}
\addtolength{\parskip}{-.0ex}

\newcommand{\Exp}{\operatornamewithlimits{\mathbb{E}}}

\newcommand{\eps}{\epsilon}

\newcommand{\F}{\mathbb{F}}

\newcommand{\polylog}[1]{\mathrm{polylog}(#1)}
\newcommand{\poly}[1]{\mathrm{poly}(#1)}
\newcommand{\dist}[1]{\mathrm{dist}(#1)}
\newcommand{\wt}[1]{\mathrm{wt}(#1)}
\newcommand{\ith}[1]{{#1}^{\text{th}}}

\newcommand{\RS}{{\mathrm{RS}}}
\newcommand{\GV}{{\mathrm{GV}}}

\newcommand{\dmin}{d_{\textrm{min}}}

\newcommand{\ignore}[1]{}

\newcommand {\etal} {\textit{et al.}\xspace}

\makeatletter
\renewcommand{\algocf@caption@boxruled}{%
  \hrule
  \hbox to \hsize{%
    \vrule\hskip-0.4pt
    \vbox{   
       \vskip\interspacetitleboxruled%
       \unhbox\algocf@capbox\hfill
       \vskip\interspacetitleboxruled
       }%
     \hskip-0.4pt\vrule%
   }\nointerlineskip%
}%
\makeatother

\begin{document}
\title{A New Coding-based Algorithm for Finding Closest Pair of Vectors\thanks{
An extended abstract of this article appeared in 
Proceedings of the 13th International Computer Science Symposium in Russia (CSR'18), pages 321--333.}}
\author{Ning Xie\thanks{Florida International University, Miami, FL 33199, USA. Email: {\texttt nxie@cis.fiu.edu}. Research supported in part by NSF grant 1423034.} 
\and Shuai Xu\thanks{Florida International University, Miami, FL 33199, USA. Email: {\texttt sxu010@fiu.edu}.
Research supported in part by NSF grant 1423034.} 
\and Yekun Xu\thanks{Florida International University, Miami, FL 33199, USA. Email: {\texttt yxu040@fiu.edu}.
Research supported in part by NSF grant 1423034.}}
\date{}
\setcounter{page}{0}
\maketitle

\begin{abstract}
Given $n$ vectors $x_0, x_1, \ldots, x_{n-1}$ in $\{0,1\}^{m}$, 
how to find two vectors whose pairwise Hamming distance is minimum?
This problem is known as the \emph{Closest Pair Problem}. 
If these vectors are generated uniformly at random  except two of them are
correlated with Pearson-correlation coefficient $\rho$, then the problem is called
the \emph{Light Bulb Problem}.
In this work, we propose a novel coding-based scheme for the Closest Pair Problem. 
We design both randomized and deterministic algorithms, 
which achieve the best-known running time when the length of input vectors $m$ is small 
and the minimum distance is very small compared to $m$. 
Specifically, the running time of our randomized algorithm is $O(n\log^{2}n\cdot 2^{c m} \cdot \poly{m})$ 
and the running time of our deterministic algorithm is $O(n\log{n}\cdot 2^{c' m} \cdot \poly{m})$, 
where $c$ and $c'$ are constants depending only on the (relative) distance of the closest pair.
When applied to the Light Bulb Problem, our result yields state-of-the-art deterministic running time 
when the Pearson-correlation coefficient $\rho$ is very large. 
Specifically, when  $\rho \geq 0.9933$, our deterministic algorithm runs faster 
than the previously best deterministic algorithm (Alman, SOSA 2019).

\end{abstract}

{\bf keywords:} Closest Pair Problem, Light Bulb Problem, Error Correcting Codes

\newpage

%%%%%%%%%%%%%%%%%%%%%%%%%%%%%%%%%%%%%%%%%%%%%%%%%%%%%%%%%%%%%%%%%%%%%%%%%%%%%%%%%%%%%
\section{Introduction}
We consider the following classic \emph{Closest Pair Problem}: 
given $n$ vectors $x_0, x_1, \ldots,$ $x_{n-1}$ in $\{0,1\}^{m}$, 
how to find the two vectors with the minimum pairwise distance? 
Here the distance is the usual Hamming distance: 
$\dist{x_i, x_j}=|\{k \in [m]: (x_i)_k \neq (x_j)_k\}|$, 
where $(x_i)_k$ denotes the $\ith{k}$ component of vector $x_i$. 
Without loss of generality, we assume that $\dmin=\dist{x_0, x_1}$ is the unique minimum distance 
and all other pairwise distances are greater than $\dmin$.

The Closest Pair Problem is one of the most fundamental and well-studied problems in many science disciplines, 
having a wide spectrum of applications in computational finance, DNA detection, weather prediction, etc. 
For instance, the Closest Pair Problem has the following interesting application in bioinformatics. 
Scientists wish to find connections between Single Nucleotide Polymorphisms (SNPs) and phenotypic traits. 
SNPs are one of the most common types of genetic differences among people,
with each SNP representing a variation in a single DNA block called \emph{nucleotide}~\cite{FBC+07}. 
Screening for most correlated pairs of SNPs has been applied to
study such connections~\cite{ARL+05, CNG+98, Cor09, MSL+07}. 
As the number of SNPs in humans is estimated to be around 10 to 11 million, 
for problem size $n$ of this size, any improvement in running time
for solving the Closest Pair Problem would have huge impacts on genetics and computational biology~\cite{MSL+07}. 

In theoretical computer science, the Closest Pair Problem 
has a long history in computational geometry, 
see e.g.~\cite{Smi97} for a survey of many classic algorithms for the problem. 
The naive algorithm for the Closest Pair Problem takes $O(mn^2)$ time. 
When the dimension $m$ is a constant, either in the Euclidean space or $\ell_p$ space, 
the classic divide-and-conquer based algorithm runs in $O(n \log n)$ time~\cite{Ben80}. 
Rabin~\cite{Rab76} combined the floor function with randomization to devise a linear time algorithm.   
In 1995, Khuller and Matias~\cite{KM95} simplified Rabin's algorithm 
to achieve the same running time $O(n)$ and space complexity $O(n)$. 
Golin~\etal~\cite{GRSS95} used dynamic perfect hashing to implement a dictionary 
and obtained the same linear time and space bounds.

When the dimension $m$ is not a constant, due to a well-know phenomenon called \textit{curse of dimensionality}, 
this problem becomes much harder. 
The first subquadratic time algorithm for the Closest Pair Problem is due to Alman and Williams~\cite{AW15} for $m$ as large as $\log^{2-o(1)}{n}$.
The algorithm is built on a newly developed framework called 
\emph{polynomial method}~\cite{Wil14a,Wil14b,AWY15}. 
In particular, Alman and Williams first constructed a probabilistic polynomial of degree 
$O(\sqrt{n\log{1/\epsilon}})$ which computes the \textsc{MAJORITY} function on $n$ variables with error at most $\epsilon$, 
then applied the polynomial method to design an algorithm which runs in $n^{2-1/O(s(n)\log^2 s(n))}$ time
where $m=s(n)\log{n}$, and computed the minimum Hamming distance among all red-blue vector pairs\footnote{The actual problem solved in~\cite{AW15} is the so-called \emph{Bichromatic Hamming Closest Pair Problem}; see discussion in Section~\ref{sec:related_work} below.} through polynomial evaluations.
In a more recent work, Alman~\etal~\cite{ACW16} unified Valiant's fast matrix multiplication 
approach~\cite{Val15} with that by Alman and Williams~\cite{AW15}.
They constructed probabilistic \emph{polynomial threshold functions} (PTFs) to obtain 
a simpler algorithm which improved to randomized time 
$n^{2-1/O(\sqrt{s(n)}\log^{3/2}{s(n)})}$ or deterministic time $n^{2-1/O(s(n)\log^2{s(n)})}$.

\medskip
\paragraph{The Light Bulb Problem.}
A special case of the Closest Pair Problem, the so-called \emph{Light Bulb Problem}, 
was first posed by L. Valiant in 1988~\cite{Val88}. 
In this problem, we are given a set of $n$ vectors in $\{0,1\}^m$ chosen uniformly at random 
from the Boolean hypercube, except that two of them are non-trivially correlated 
(specifically, have Pearson-correlation coefficient $\rho$, which is equivalent to that 
the expected Hamming distance between the correlated pair is $\frac{1-\rho}{2}m$), 
the problem then is to find the correlated pair. 

Paturi~\etal~\cite{PRR95} gave the first non-trivial algorithm, which runs\footnote{We adopt the common notation
$\tilde{O}(n^k)$ to denote $n^k\cdot \polylog{n}$.} in $\tilde{O}(n^{2-\log (1+\rho)})$. 
In 2010, Dubiner~\cite{Dub10} proposed a Bucketing Coding algorithm which runs in time 
$\tilde{O}(n^{\frac{2}{1+\rho}})$. 
The well-known \emph{locality sensitive hashing} scheme of Indyk and Motwani~\cite{IM98} 
performs slightly worse than Paturi~\etal's hash-based algorithm but recent \emph{data-dependent LSH}~\cite{ALRW17}
matches the running time of Dubiner's. Roughly speaking, a family of hash functions $\mathcal{H}$ is 
called \emph{$(r, cr, p_1, p_2)$-sensitive} if, for any two points $p$ and $q$ in a metric space $(X, d)$, a randomly chosen hash function
from $\mathcal{H}$ hashes $p$ and $q$ into the same bucket with probability at least $p_1$ if they are close (i.e., when $d(p,q)\leq r$) and
with probability at most $p_2$ if they are far apart (i.e., when $d(p,q)\geq cr$), where $c>1$ is the \emph{approximation} factor and $p_1>p_2$.
Indyk and Motwani~\cite{IM98} proved that such a family of LSH can be used to construct a data structure solving the $c$-approximate
Nearest Neighbor Search problem. Specifically,  for a data set consisting of at most $n$ points from $X$,
the data structure uses $\tilde{O}(n^{1+\varrho})$ space (and $\tilde{O}(n^{1+\varrho})$ preprocessing time)
and supports $\tilde{O}(m\cdot n^\varrho)$ query time, where $m$ is the dimension of the space and 
$\varrho:=\frac{\log{1/p_1}}{\log{1/p_2}}$ basically quantifies the quality of the LSH.
When $(X,d)$ is the Hamming space, the original work of Indyk and Motwani~\cite{IM98} achieved $\varrho \leq 1/c$,
while the current best result is $\varrho=1/2c-1$ by Andoni~\etal~\cite{ALRW17}, under the framework of \emph{data-dependent LSH}.
Applying LSH to the Light Bulb Problem, we have $m=O(\log{n})$, $c\geq\frac{1}{1-\rho}$ with high probability, and 
we need to pay the one-time preprocessing time and $n$ queries for each vector to search for its nearest neighbor in the data set.
Therefore LSH solves the Light Bulb Problem in time $\tilde{O}(n^{2-\rho})$ using the original data-independent scheme of Indyk and Motwani,
and can be improved to $\tilde{O}(n^{\frac{2}{1+\rho}})$ using the data-dependent scheme in~\cite{ALRW17}.
As $\rho$ gets small, all these three algorithms have running time 
$\tilde{O}(n^{2-c\rho})$ for various constants $c$.\footnote{When $\rho$ goes to zero, the exponent in the running time of
Paturi~\etal~\cite{PRR95} is $2-\log(e)\cdot\rho+O(\rho^2)$.} 
Comparing the constants in these three algorithms, 
Dubiner and data-dependent LSH achieve the best constant, which is $\tilde{O}(n^{2-2\rho})$, in the limit of $\rho \to 0$. 
Asymptotically the same bound was also achieved by May and Ozerov~\cite{MO15}, in which
the authors used algorithms that find Hamming closest pairs to improve the running time of
decoding random binary linear codes.
 
The breakthrough result of Valiant~\cite{Val15} is a fast matrix multiplication based algorithm 
which finds the ``planted'' closest pair in time 
$O(\frac{n^{\frac{5-\omega}{4-\omega}+\epsilon}}{\rho^{2\omega}})< n^{1.62} \cdot \poly{1/\rho}$ 
with high probability for any constant $\epsilon,\rho>0$ and $m > n^{\frac{1}{4-\omega}}/\rho^2$, 
where $\omega<2.373$ is the exponent of fast matrix multiplications. 
The most striking feature of Valiant's algorithm is that $\rho$ does not appear in the exponent of $n$ 
in the running time of the algorithm.
Karppa~\etal~\cite{KKK16} further improved Valiant's algorithm to $n^{1.582}$. Very recently, Alman~\cite{Alm19} combined techniques in~\cite{Val15} with the polynomial method to give a very elegant and simple algorithm which matches Karppa~\etal' bound. Moreover, Alman derandomized his algorithm and improved on the previously best
deterministic running time by Karppa~\etal~\cite{KKKO16}. 
Note that Valiant, Karppa~\etal and Alman achieved runtimes of $n^{2-\Omega(1)}(m/\epsilon)^{O(1)}$ 
for the Light Bulb Problem, which improved upon previous algorithms that rely on the
Locality Sensitive Hashing (LSH) schemes. The LSH based algorithms only achieved runtime of 
$n^{2-O(\epsilon)}$ for the Light Bulb Problem. 

We remark that all the above-mentioned algorithms (except May and Ozerov's work) 
that achieve state-of-the-art running time
are based on either involved probabilistic polynomial constructions or impractical $O(n^\omega)$ fast
matrix multiplications\footnote{Subcubic fast matrix multiplication algorithms are practical for Strassen-based ones~\cite{BB15,HRM+17} and are practical for very large input sizes up to $\omega=2.7734$ (see e.g. the survey~\cite{Pan18}). However, all other theoretically more efficient algorithms, such as recent developments~\cite{Sto10,Wil12,Leg12}, are superior to the trivial cubic algorithm only for matrices of colossal sizes.}, or both. 

\medskip
\paragraph{Overview of our main results.}
In this work, we propose a new coding-based scheme for the Closest Pair Problem. 
We design both randomized and deterministic algorithms, which achieve the best-known running time when the length of input vectors $m$ is small ($m=O(\log{n})$) and the minimum distance is very small compared to $m$. 
Specifically, the running time of our randomized algorithm is $O(n\log^{2}n\cdot 2^{c m} \cdot \poly{m})$ 
and the running time of our deterministic algorithm is $O(n\log{n}\cdot 2^{c' m} \cdot \poly{m})$, 
where $c$ and $c'$ are constants depending only on the (relative) distance of the closest pair; see Section~\ref{sec:our_results} for precise statements. Since the running time of our algorithms are exponential in $m$, they are subquadratic-time algorithms only when $m\leq \alpha \log{n}$ for some constant $\alpha>0$.
When applied to the Light Bulb Problem, our deterministic algorithm achieves state-of-the-art running time 
when the Pearson-correlation coefficient $\rho$ is very large.

\subsection{Our approach}
\RestyleAlgo{boxruled}
\LinesNumbered
\begin{algorithm}[ht] 
\caption{General Idea of Main Algorithm}\label{alg:main_intro}
\SetKwData{Left}{left}\SetKwData{This}{this}\SetKwData{Up}{up}
\SetKwFunction{Decode}{Decode}
\SetKwInOut{Input}{input}\SetKwInOut{Output}{output}
\Input{A set of $n$ vectors $x_0,  \ldots, x_{n-1}$ in $\{0,1\}^{m}$ and $\dmin$ }
\Output{Two vectors and their distance}
\BlankLine
	generate a binary code $C \subseteq \{0,1\}^{m}$  \\
	pick a random $y\in \{0,1\}^{m}$ \\
	\For{$j\leftarrow 0$ \KwTo $n-1$}{\label{forins'}
		decode $y+x_j$ in $C$, and denote the resulting vector by $\tilde{x}_j$ % if not decodable, $\tilde{x}_j$=$x_j$
	}
	sort $\tilde{x}_{0}, \ldots, \tilde{x}_{n-1}$ \\ 
		
	\For {\emph{each of the $n-1$ pairs of adjacent vectors in the sorted list}} {
	    compute the distance between the two \emph{original} vectors.
	}
	output the pair of vectors with the minimum distance and their distance \
\end{algorithm}

We propose a simple, error-correcting code based scheme for the Closest Pair Problem.
Apart from achieving the best running time for certain range of parameters,
we believe that our new approach has the merit of being simple, and hence more likely being practical as well.
In particular, neither complicated data structure nor fast matrix multiplication is employed in our algorithms.

The basic idea of our algorithms is very simple. 
Suppose for concreteness that $x_0$ and $x_1$ are the unique pair of vectors that achieve the minimum distance. 
Our scheme is inspired by the extreme case when $x_0$ and $x_1$ are identical vectors.
In this case, a simple \emph{sort and check} approach solves the problem in $O(mn\log{n})$ time:
sort all $n$ vectors and then compute only the $n-1$ pairwise distances (instead of all
$\binom{n}{2}$ distances) of adjacent vectors in the sorted list.
Since the two closest vectors are identical, they must be adjacent in the sorted list and thus the algorithm
would compute their distance and find them. 
This motivates us to view the input vectors as received messages that were encoded by an error correction code
and have been transmitted through a noisy channel. 
As a result, the originally identical vectors are no longer the same, nevertheless are still very close.
Directly applying the \emph{sort and check} approach would fail but a natural remedy is 
to decode these received messages into codewords first. 
Indeed, if the distance between $x_0$ and $x_1$ is small and we are lucky to 
have a codeword $c$ that is very close to both of them, 
then a unique decoding algorithm would decode both of these two vectors into $c$. 
Now if we ``sort'' the \emph{decoded} vectors and 
then ``check'' the corresponding \emph{original} vectors of each adjacent pair of vectors\footnote{Actually, we only need to ``check'' when the two adjacent decoded vectors are identical.}, 
the algorithm would successfully find the closest pair.
How to turn this ``good luck'' into a working algorithm? 
Simply try different shift vectors $y$ and view $y+x_i$ as the input vectors,
since the Hamming distances are invariant under any shift.
The basic idea of our approach is summarized in Algorithm~\ref{alg:main_intro}.

Figure~\ref{fig1} illustrates the effects ``bad'' shift vectors and ``good'' shift vectors
on the decoding part of our algorithm; here single arrows with dotted lines point from original vectors to shifted target vectors, while
double arrows indicate into which codewords are shifted target vectors decoded. 
In Figure~\ref{fig:1A}, our shifted target vectors $x_0+y'$ and $x_1+y'$ are decoded into two different codewords, so $y'$ is a bad shift. 
In Figure~\ref{fig:1B}, our shifted target vectors $x_0+y$ and $x_1+y$ are decoded into the same codeword, 
therefore we can apply the sort-and-check approach to find the closest pair.

Figure~\ref{fig2} illustrates what happens if we sort the vectors directly and why sorting decoded vectors
works.

% Once our binary code is generated, we need to try different shift vectors $y$ on all $n$ points 
% such that $y+x_0$ and $y+x_1$ are lying in the same codeword's decoding radius, 
% thus $x_0$ and $x_1$ can be decoded into the same codeword (see Figure 1). 
% After that we sort these $n$ vectors, and then by checking the adjacent pair 
% we will find the closest pair $x_0$ and $x_1$ (since $x_0$ and $x_1$ are decoded to a same codeword vector, 
% they must be adjacent) (see Figure 2). 
% Here is the general idea of our Main Algorithm, 
% for more details on how to choose a suitable binary code and decode steps, 
% you can find in Section~\ref{sec:main}.

\begin{figure}
     \begin{subfigure}{0.45\textwidth}
          \centering
          \includegraphics[scale=0.8]{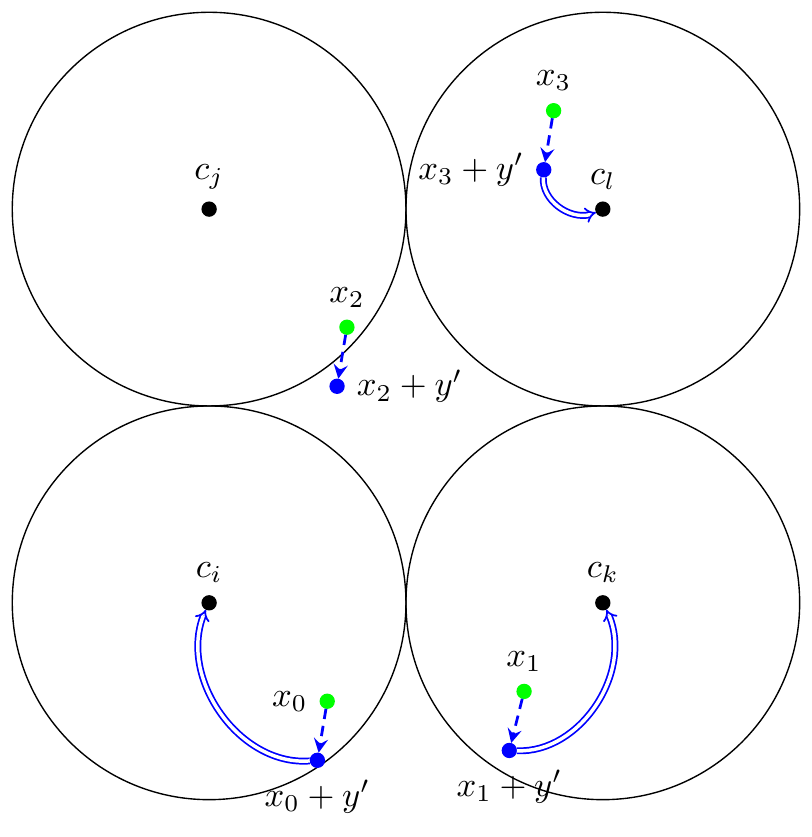} 
          \caption{bad shift}
          \label{fig:1A}
     \end{subfigure}
     \qquad
     \begin{subfigure}{0.45\textwidth}
          \centering
          \includegraphics[scale=0.8]{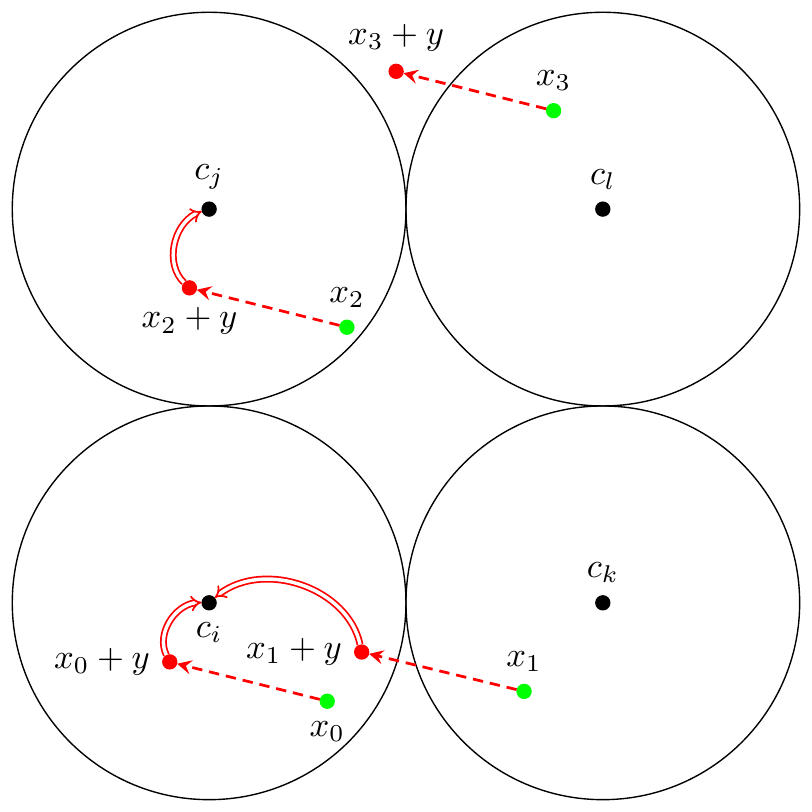} 
          \caption{good shift}
          \label{fig:1B}
     \end{subfigure}
     \caption{Decoding with good and bad shift vectors}\label{fig:M1}\label{fig1}
\end{figure}

\begin{figure}
     \begin{subfigure}{0.45\textwidth}
          \centering
          \includegraphics[scale=0.75]{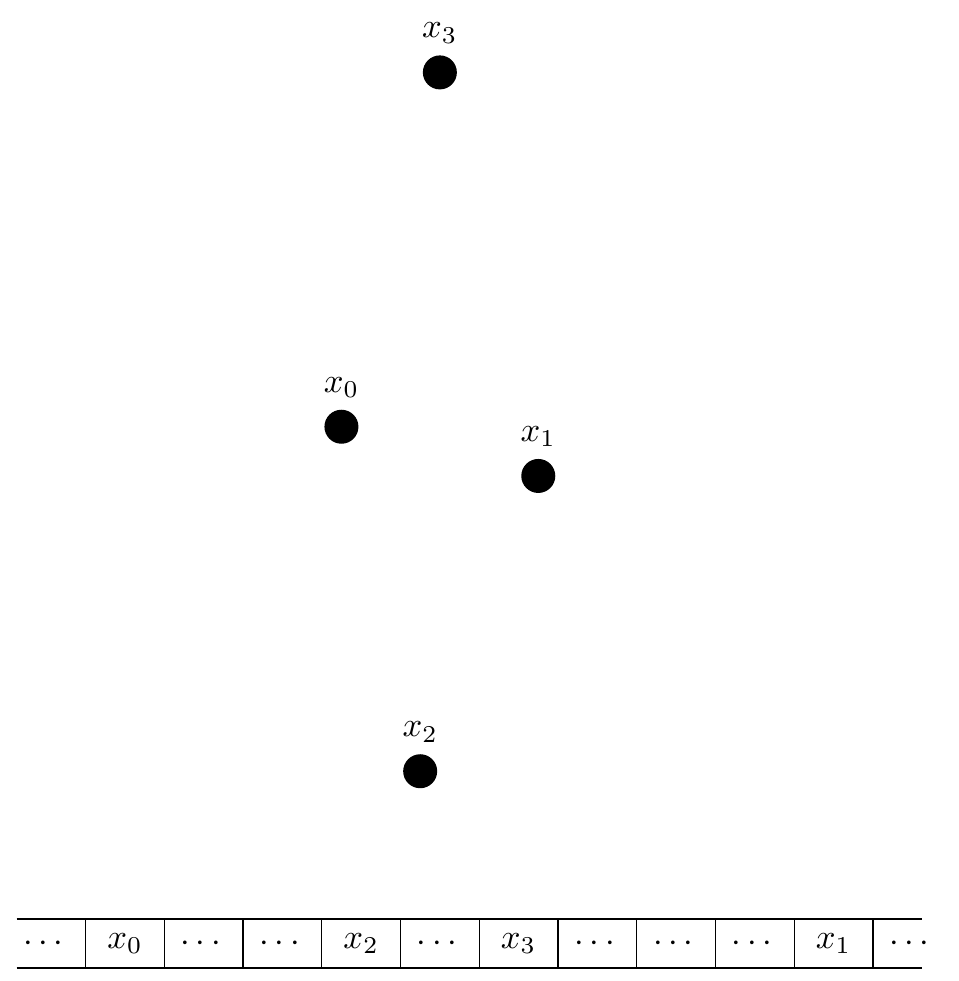} 
          \caption{Sorting original vectors directly}
          \label{fig:2A}
     \end{subfigure}
     \qquad
     \begin{subfigure}{0.45\textwidth}
          \centering
          \includegraphics[scale=0.75]{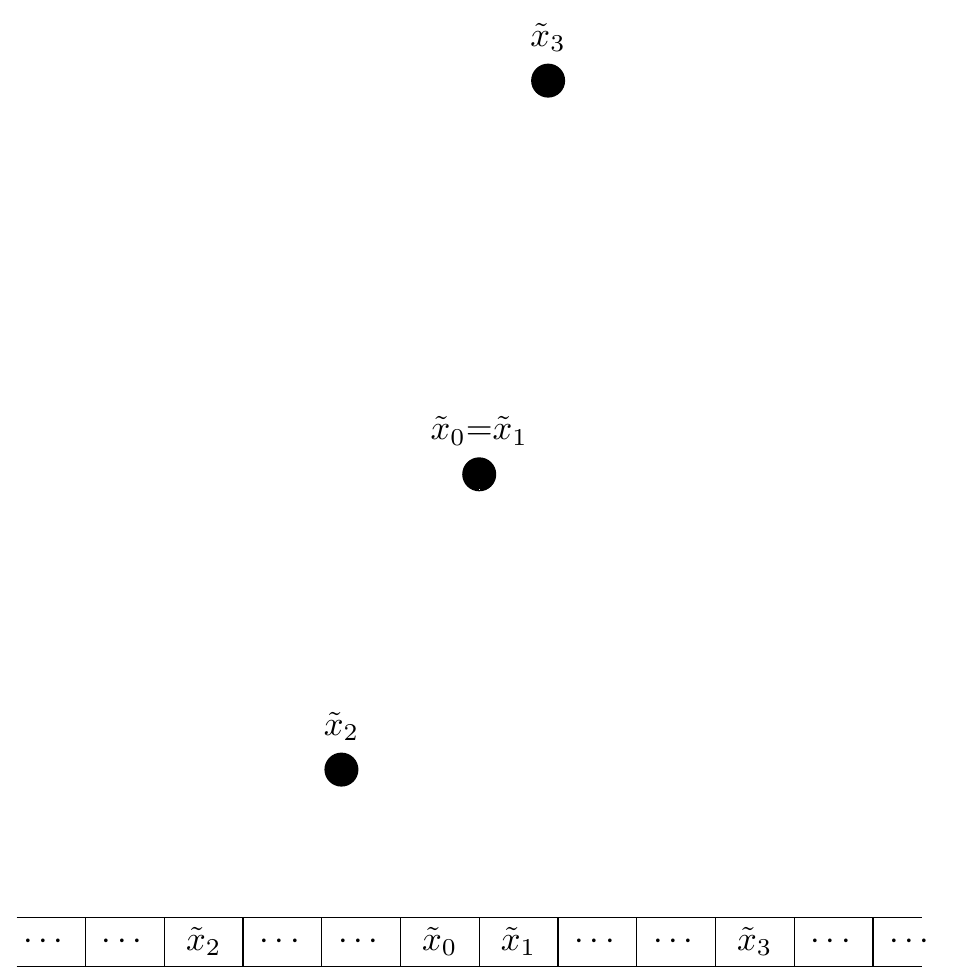}  
          \caption{Sorting decoded vectors}
          \label{fig:2B}
     \end{subfigure}
     \caption{Difference between sorting input vectors directly and sorting decoded vectors.} \label{fig2}
\end{figure}

Making the idea of decoding work for larger minimum pairwise distance involves balancing the parameters of 
the error-correcting code so that it is efficiently decodable 
as well as having appropriate decoding radius. 
The decoding radius $r$ should have the following properties. 
On one hand, $r$ should be small to ensure that there is a codeword $c$ such that 
only $x_0$ and $x_1$ will be decoded into $c$ 
(therefore $x_0$ and $x_1$ will be adjacent in the sorted array and hence will be compared with each other). 
On the other hand, we would like $r$ to be large so as to maximize the number of ``good'' shift vectors 
which enable both $x_0$ and $x_1$ decoding to the same codeword. 
As a result, our algorithms generally perform best when the closest pair distance is very small.

% Then by our proof of Theorem 3.1, we only need to repeat this process $ O(2^{(1-\kappa_{\mathsmaller{Z}}(\delta)-\delta)m}m^{1/2}\log{n})$ times to find the closest pair $x_0$ and $x_1$ with probability at least $1-1/n^2$. For more details about our Main Algorithm, please see Algorithm 2 in Section~\ref{sec:main}.

\subsection{Our results}\label{sec:our_results}
Our simple error-correcting code based algorithm can be applied to  
solve the Closest Pair Problem and the Light Bulb Problem.

\medskip
\subsubsection{The Closest Pair Problem} 
% Our algorithms out-perform all known algorithms for the Closest Pair Problem
% when the pairwise distance is very small compared with the vector length. 
% We give both randomized and deterministic algorithm for the problem. 
% We call this result of Closest Pair Problem as Non-gapped version.
Our main result is the following simple randomized algorithm for the Closest Pair Problem.
\begin{theorem}[Main]\label{thm:general_rand'}
Let $x_0, x_1, \ldots, x_{n-1}$ in $\{0,1\}^{m}$ be $n$ binary vectors such that 
$x_0$ and $x_1$ is the unique pair achieving the minimum pairwise distance $\dmin$ (and the second smallest
distance can be as small as $\dmin + 1$). 
Suppose\footnote{In fact this assumption can be easily removed with a small overhead in the running time; 
see the discussion below and Section~\ref{sec:search}.} we are given the value of $\dmin$ and let $\delta := \dmin/m$.
Then there is a randomized algorithm
running in $O(n\log^{2}n\cdot 2^{(1-\kappa_{\mathsmaller{Z}}(\delta)-\delta)m} \cdot \poly{m})$ 
which finds the closest pair $x_0$ and $x_1$ with probability at least $1-1/n^2$.
The running time can be improved to 
$O(n\log^{2}n\cdot 2^{(H_{2}(\delta)-\delta)m} \cdot \poly{m})$,
if we are given black-box decoding algorithms for an ensemble of $O(\log m/\eps)$
binary error-correcting codes that meet the Gilbert-Varshamov bound.  
\end{theorem}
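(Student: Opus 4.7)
The plan is to analyze Algorithm~\ref{alg:main_intro} with a carefully chosen binary code $C$ and show that, over a uniformly random shift $y$, two good events simultaneously hold with noticeable probability: (E1) both $y+x_0$ and $y+x_1$ decode to a common codeword $c^\star \in C$; (E2) no other input vector $x_j$ decodes to $c^\star$. When (E1) and (E2) both occur, $\tilde{x}_0 = \tilde{x}_1 = c^\star$ is unique among the decoded values, so $x_0, x_1$ are adjacent after sorting and the explicit original-distance check on adjacent pairs recovers them.

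For (E1), write $z := x_0 \oplus x_1$, of Hamming weight $\dmin = \delta m$, and let $C$ have relative minimum distance slightly above $\delta$ so that unique decoding within radius $r = \lceil \dmin/2 \rceil$ is well-defined. Because decoding balls of distinct codewords are disjoint (the minimum distance exceeds $2r$), the events $\{y+x_0 \in B(c,r)\}$ for distinct $c$ are mutually exclusive, and by translation invariance
$$p_{\text{good}} \;=\; \Pr_y\bigl[\exists c \in C : y+x_0, y+x_1 \in B(c,r)\bigr] \;=\; \frac{|C| \cdot |B(0,r) \cap B(z,r)|}{2^m}.$$
Splitting any $w$ in the intersection by its coordinates on versus off the support of $z$ and enforcing the two Hamming-weight constraints yields $|B(0,r) \cap B(z,r)| = 2^{\delta m - o(m)}$ (the extremal $w$ are forced to have weight $\delta m/2$ on $\mathrm{supp}(z)$ and $0$ elsewhere). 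With $|C| = 2^{R_C m}$, this gives $p_{\text{good}} = 2^{-(1 - R_C - \delta)m - o(m)}$.

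Event (E2) is handled by a union bound: the triangle inequality forces $\dist{x_i,x_j} \leq 2r$ whenever $y+x_i$ and $y+x_j$ share a decoding ball, so only pairs with pairwise distance near $\dmin$ can compete with $(x_0,x_1)$ for the common codeword, and summing the joint-ball probability across the $n-2$ other vectors gives an expected count bounded by $n \cdot |B(0,r)|/2^m = 2^{-(1 - H_2(\delta))m + \log n + o(m)}$, which is $o(1)$ in the regime $m \leq \alpha\log n$ where the theorem is non-trivial. The two parts of the theorem then follow by plugging in the right code family: for the randomized statement, a concatenated Reed--Solomon $\circ$ Hadamard code meeting the Zyablov bound has rate $R_C = \kappa_Z(\delta)$ and a $\poly{m}$-time unique decoder, yielding exponent $1 - \kappa_Z(\delta) - \delta$; for the improved statement, the assumed $O(\log(m/\eps))$-size ensemble of GV-meeting codes has rate $R_C = 1 - H_2(\delta)$, yielding exponent $H_2(\delta) - \delta$ (the $\eps$-slack ensemble is used to certify existence of a good code for the specific parameters $m, \delta$).

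Finally, I would boost the success probability by running $T = \Theta(\log n / p_{\text{good}})$ independent shift-and-decode trials, driving the overall failure probability below $1/n^2$ via a standard Chernoff bound. Each trial costs $O(n)\cdot\poly{m}$ for the $n$ decodings plus $O(n\log n)$ for the sort and the adjacent-distance sweep, and multiplying by $T$ gives the stated $O(n\log^{2}n\cdot 2^{cm}\cdot\poly{m})$ runtime. The main technical obstacle will be the careful joint optimization of the code's relative distance and the decoding radius so that the asymptotic exponent in $p_{\text{good}}$ matches exactly $1 - \kappa_Z(\delta) - \delta$ (respectively $H_2(\delta) - \delta$) without extra slack, together with handling the minor parity issues of $\dmin$ without compromising either the ball-intersection count or the disjointness used in (E2).
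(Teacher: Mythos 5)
Your overall strategy — shift by a random $y$, decode to a code of minimum distance just above $\dmin$, sort the decoded vectors and check adjacent pairs, then boost by $\Theta(\log n / p_{\text{good}})$ repetitions — is exactly the paper's approach, and your (E1) analysis is equivalent to the paper's: your count $|B(0,r)\cap B(z,r)|$ reduces (for $r=\dmin/2$ and $\dmin$ even) to exactly $\binom{\dmin}{\dmin/2}$, which is the size of the paper's set $\textsc{mid}$, and your formula $p_{\text{good}} = |C|\cdot|B(0,r)\cap B(z,r)|/2^m$ is the paper's bound $K\binom{\dmin}{\dmin/2}/2^m$ rewritten via translation invariance.

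However, your handling of (E2) contains a genuine gap. You observe that the triangle inequality forces any colliding $x_j$ to satisfy $\dist{x_0,x_j}\le 2r=\dmin$, but you then set this aside and instead apply a union bound yielding an expected collision count of $n\cdot|B(0,r)|/2^m$, claiming it is $o(1)$ ``in the regime $m\le\alpha\log n$.'' This is both unnecessary and backwards: the quantity $n\cdot|B(0,r)|/2^m = 2^{\log n - (1-H_2(\delta))m + o(m)}$ is $o(1)$ only when $m$ is \emph{at least} a suitable multiple of $\log n$, not at most, so the regime you invoke does not make your bound work. The fix is simply to finish the deterministic argument you started: since the code's minimum distance is $\dmin+1$ and (with $\dmin$ even w.l.o.g.) $2r=\dmin$, any $x_j$ with $j\ge 2$ that decoded to the same $c^\star$ as $x_0$ would have $\dist{x_0,x_j}\le\dmin$, contradicting the uniqueness of the closest pair. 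Thus (E2) is implied by (E1) with no probabilistic argument and no regime restriction, which is needed since the theorem statement places no constraint on $m$ relative to $\log n$. A second, smaller error: a Reed--Solomon $\circ$ Hadamard concatenation cannot meet the Zyablov bound, since Hadamard's rate is $\Theta(\log m_2/m_2) \to 0$; the Zyablov exponent requires concatenating Reed--Solomon with an inner code meeting the Gilbert--Varshamov bound (e.g.\ Gilbert's greedy code over a block of length $O(\log m)$), which is what the paper uses.
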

Here $\kappa_{\mathsmaller{GV}}(\delta)$ and $\kappa_{\mathsmaller{Z}}(\delta)$ are functions 
derived from the Gilbert-Varshamov (GV) bound and the Zyablov bound respectively 
(see Section~\ref{sec:our_codes} for details). 
Specifically, $\kappa_{\mathsmaller{GV}}(\delta)=1-H_{2}(\delta)$,
and both $\kappa_{\mathsmaller{GV}}(\delta)$ and $\kappa_{\mathsmaller{Z}}(\delta)$ are monotone decreasing functions
for $\delta \in [0, 1/2]$, with function values ranging from $1$ to $0$; see e.g. Figure~9.2 in~\cite{GRS18} for an illustration.

The running time of our algorithm depends on --- in addition to the number of vectors $n$ --- 
both dimension $m$ and $\delta := \dmin/m$. 
To illustrate its performance we choose two typical 
vector lengths $m$, namely those corresponding to the Hamming bound\footnote{The Hamming bound, also 
known as the sphere packing bound, specifies an upper bound on the number of codewords 
a code can have given the block length and the minimum distance of the code.}
and the Gilbert-Varshamov (GV) bound\footnote{The GV bound is known to be attainable by random codes.},
and list the exponents $\gamma'$ in the running time of the GV-code version of our algorithm
as a function of $\dmin$ (in fact $\delta$) 
in Table~\ref{table-1}.
Here, we write the running of the algorithm as $\tilde{O}(n^{\gamma'})$,
where $\tilde{O}$ suppresses any polylogarithmic factor of $n$. 
One can see that our algorithm runs in subquadratic time when $\delta$ is small,
or equivalently when the Hamming distance between the closest pair is small. For instance, when $\delta=0.05$, and the length $m=1.4013 \log{n}$, then the running time is $O(n^{1.3313})$ if we use GV bound.

\begin{table}
\centering
\caption{Running time of our algorithm when vector length $m$ and relative distance $\delta$ meets the Hamming bound and GV bound }
\label{table-1}
\begin{tabular}{|c|c|c|c|c|}
\hline
         & \multicolumn{2}{c|}{Hamming bound}                                                                                                             & \multicolumn{2}{c|}{GV bound}                                                                                                                  \\ \hline
$\delta$ & \begin{tabular}[c]{@{}c@{}}length of vector\\ ($m/\log{n}$)\end{tabular} & \begin{tabular}[c]{@{}c@{}}exponent ($\gamma'$) \end{tabular} & \begin{tabular}[c]{@{}c@{}}length of vector\\ ($m/\log{n}$)\end{tabular} & \begin{tabular}[c]{@{}c@{}}exponent ($\gamma'$) \end{tabular} \\ \hline
0.01     & 1.0476                                                                   & 1.0742                                                               & 1.0879                                                                    & 1.0770                                                              \\ \hline
0.025     & 1.1074                                                                   & 1.1591                                                             & 1.2029                                                                    & 1.1728                                                              \\ \hline
0.05     & 1.2029                                                                   & 1.2844                                                               & 1.4013                                                                    & 1.3313                                                              \\ \hline
0.075     & 1.2999                                                                   & 1.4021                                                              & 1.6242                                                                   & 1.5024                                                              \\ \hline
0.1		& 1.4013                                                                   & 1.5171                                                              & 1.8832                                                                   & 1.6949                                                              \\ \hline
0.125     & 1.5090                                                                   & 1.6316                                                              & 2.1909                                                                   & 1.9170                                                              \\ \hline
0.133     & 1.5449 											 & 1.6684                                                               & 2.3064											 & 1.9989         
 \\ \hline
\end{tabular}
\end{table}

In the setting of $m=c\log{n}$ for some not too large constant $c$, 
the current best result is the randomized algorithm of Alman~\etal~\cite{ACW16}, which runs in $n^{2-1/O(\sqrt{c}\log^{3/2}{c})}$ time
for the Closest Pair Problem. As it is very hard to calculate the hidden constant in the exponent 
of their running time, it is impossible to compare our running time with theirs quantitatively.

\medskip
\paragraph{Deterministic algorithm.}
By checking all shift vectors up to certain Hamming weight, 
our randomized algorithm can be easily derandomized to yield the following theorem.
\begin{theorem}\label{thm:general_det'}
Let $x_0, x_1, \ldots, x_{n-1}$ in $\{0,1\}^{m}$ be $n$ binary vectors such that 
$x_0$ and $x_1$ is the unique pair achieving the minimum pairwise distance $\dmin$ (and the second smallest
distance can be as small as $\dmin + 1$). 
Suppose we are given the value of $\dmin$ and let $\delta := \dmin/m$.
Then there is a deterministic algorithm that finds the closest pair $x_0$ and $x_1$ with
running time $O(n\log{n}\cdot 2^{H_{2}(1-\kappa_{\mathsmaller{Z}}(\delta))m} \cdot \poly{m})$,
where $H_{2}(\cdot)$ is the binary entropy function.
Moreover, if we are given as black box the decoding algorithm of a random Varshamov linear code
with block length $m$ and minimum distance $\dmin+1$, then the running time is
$O(n\log{n}\cdot 2^{H_{2}(H_{2}(\delta))m} \cdot \poly{m})$.
\end{theorem}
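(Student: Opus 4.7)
The plan is to derandomize the algorithm of Theorem~\ref{thm:general_rand'} by replacing the random choice of shift vector $y$ with an exhaustive enumeration over the set
\[
Y_w := \{y \in \{0,1\}^m : |y| \le w\}
\]
for an appropriate weight threshold $w$, and to argue via a covering-radius bound that $Y_w$ always contains a shift that makes the decode-sort-check step succeed on the unique closest pair $(x_0, x_1)$. The algorithm itself is immediate: enumerate each $y \in Y_w$, run the decode-sort-check procedure of Algorithm~\ref{alg:main_intro}, and output the smallest-distance pair ever encountered. Correctness just requires that \emph{some} $y \in Y_w$ is a ``good'' shift for $(x_0,x_1)$, i.e.\ one for which both $y+x_0$ and $y+x_1$ decode to the same codeword.

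The key step is to characterize good shifts and then certify that small-weight good shifts exist. Let $C$ be the underlying code (with minimum distance $\dmin+1$ and unique-decoding radius $r = \lfloor \dmin/2 \rfloor$). A shift $y$ is good for $(x_0, x_1)$ iff there exist $c \in C$ and $z \in \{0,1\}^m$ with $|z| \le r$ and $|z + (x_0+x_1)| \le r$ such that $y = c + x_0 + z$. Since $|x_0+x_1| = \dmin = 2r$, the set of admissible $z$'s is non-empty: any weight-$r$ vector supported on the support of $x_0+x_1$ works. Fix one such $z$; then finding a good $y$ with $|y| \le w$ reduces to finding a codeword $c \in C$ within Hamming distance $w$ of $x_0 + z$. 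This is possible for \emph{every} target vector as soon as $w$ is at least the covering radius $\rho(C)$ of $C$.

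I then invoke the standard redundancy bound: any linear $[m,k]$ code has $\rho(C) \le m - k = (1 - R)m$, since every syndrome can be realized as a sum of at most $m-k$ columns of the parity-check matrix, so every coset has a leader of weight at most $m-k$. Setting $w := (1-R)m$ therefore guarantees that $Y_w$ contains a good shift. The running-time accounting is then straightforward: $|Y_w| = \sum_{i \le (1-R)m} \binom{m}{i} = O(2^{H_2(1-R)m} \cdot \poly(m))$ (using that $1-R \le 1/2$ in the parameter regime of interest), each iteration costs $O(n \log n \cdot \poly(m))$ for $n$ black-box decodings, one sort on the decoded vectors, and $n-1$ adjacent-pair distance computations. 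Instantiating $C$ with a Zyablov-bound code ($R = \kappa_{\mathsmaller{Z}}(\delta)$, decoder provided in the paper) yields the first bound $O(n\log n \cdot 2^{H_2(1-\kappa_{\mathsmaller{Z}}(\delta))m} \cdot \poly(m))$; instantiating with a Varshamov random linear code meeting the GV bound ($R = 1 - H_2(\delta)$, decoder given as a black box) yields the second bound $O(n\log n \cdot 2^{H_2(H_2(\delta))m} \cdot \poly(m))$.

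The main obstacle is the covering-radius step, where one must translate the abstract ``some good shift exists'' guarantee of the randomized analysis into an explicit weight bound. The redundancy bound is classical and tight for the purposes here, so the argument should be essentially routine; the only delicate point is to verify that in the relevant ranges of $\delta$ one has $1 - \kappa_{\mathsmaller{Z}}(\delta) \le 1/2$ (respectively $H_2(\delta) \le 1/2$), so that the sum $\sum_{i \le w} \binom{m}{i}$ is indeed concentrated at the upper end and gives the claimed exponent $2^{H_2(\cdot)m}$ rather than $2^m$. Beyond that, no new ingredient is needed: the correctness, decoding, and sorting parts are inherited verbatim from the randomized algorithm.
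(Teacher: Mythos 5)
Your proposal is correct and takes essentially the same route as the paper's own proof: both derandomize by enumerating shift vectors over the Hamming ball $B(0^m, w)$ with $w$ equal to the covering radius, bounded by the redundancy $m-k = (1-R)m$ of a \emph{linear} code (RS concatenated with a random Varshamov inner code for the Zyablov rate, or a random Varshamov linear code for the GV rate), and both inherit correctness from the good-shift analysis of the randomized algorithm. Your explicit caveat about checking $1-\kappa_{\mathsmaller{Z}}(\delta)\le 1/2$ (resp.\ $H_2(\delta)\le 1/2$) before replacing $\sum_{i\le w}\binom{m}{i}$ by $2^{H_2(w/m)m}$ is a small point the paper leaves implicit.
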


\medskip
\paragraph{Searching for $\dmin$.}
If we remove the assumption that $\dmin$ is given, 
our algorithm can be modified to search for $\dmin$ first without too much slowdown; 
more details appear in Section~\ref{sec:search}.
\begin{theorem}
Let $x_0, x_1, \ldots, x_{n-1}$ in $\{0,1\}^{m}$ be $n$ binary vectors such that 
$x_0$ and $x_1$ is the unique pair achieving the minimum pairwise distance $\dmin$.
Then for any $\eps>0$, there is a randomized algorithm that
runs in $O(\eps^{-1}n\log^{2}n\cdot 
2^{(1-\kappa_{\mathsmaller{Z}}((1+\eps)\delta)-\delta H_{2}(\frac{1-\eps}{2}))m} \cdot \poly{m})$ 
which finds the $\dmin$ (and the pair $x_0$ and $x_1)$ with probability at least
$1-1/n$,
The running time can be improved to 
$O(\eps^{-1}n\log^{2}n\cdot 
2^{(H_{2}((1+\epsilon)\delta)-\delta H_{2}(\frac{1-\eps}{2}))m} \cdot \poly{m})$,
if we are given black-box decoding algorithms for an ensemble of $O(\log m/\eps)$
binary error-correcting codes that meet the Gilbert-Varshamov bound.
\end{theorem}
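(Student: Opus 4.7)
My plan is to remove the assumption that $\dmin$ is known by enumerating candidate values and running the algorithm of Theorem~\ref{thm:general_rand'} (and its GV-code variant) on each candidate, controlling the total overhead via a careful search. The key observation is that the per-iteration running time of the main algorithm is monotonically increasing in the minimum-distance parameter, so the total cost is dominated by the last (successful) iteration — provided we can keep the number of iterations small and ensure that the successful iteration's candidate is only a factor $(1+\eps)$ off from $\dmin$.

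Concretely, I would instantiate candidate distances geometrically, say $d'_k := \lceil (1+\eps)^k \rceil$ for $k = 0, 1, \ldots, K$, with $K = O(\log m/\eps)$ chosen so that $d'_K \geq m/2$. For each candidate $d'_k$ I would run Algorithm~\ref{alg:main_intro} pretending that $\dmin = d'_k$, using a code of minimum distance $d'_k + 1$ (either the Zyablov code used in the first bound or the GV ensemble for the second). Letting $k^\star$ be the smallest index with $d'_{k^\star} \geq \dmin$, by construction $\dmin \leq d'_{k^\star} \leq (1+\eps)\dmin$, so the $k^\star$-th iteration satisfies the hypotheses of Theorem~\ref{thm:general_rand'} with relative distance in $[\delta,(1+\eps)\delta]$ and will therefore find $(x_0,x_1)$ with high probability. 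The output is then simply the closest pair among all pairs examined across iterations, which matches $(x_0,x_1)$ as soon as $k^\star$ is reached.

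For the running-time accounting I would argue as follows. The per-iteration cost of Theorem~\ref{thm:general_rand'} scales like $n\log^2 n \cdot 2^{(1-\kappa_{\mathsmaller{Z}}(d'_k/m)-p_k)m}\cdot \poly m$, where $p_k$ is the (log of the) probability that a uniformly random shift $y$ is good, i.e., that both $y+x_0$ and $y+x_1$ decode to the same codeword of the code used in iteration $k$. Because $\kappa_{\mathsmaller{Z}}$ is monotonically decreasing, iteration $k^\star$ dominates the sum up to constant factors, so summing over $k \le K$ contributes an extra factor of $K = O(\log m /\eps)$; absorbing $\log m$ into $\poly m$ leaves the advertised $\eps^{-1}$ overhead, and the $1-\kappa_{\mathsmaller{Z}}((1+\eps)\delta)$ term in the exponent is exactly what one obtains from $d'_{k^\star}/m \le (1+\eps)\delta$.

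The technical heart of the argument — and the step I expect to be the main obstacle — is sharpening the good-shift probability analysis to recover the $-\delta\, H_{2}\!\bigl(\tfrac{1-\eps}{2}\bigr)$ term in the exponent, instead of the $-\delta$ that the known-$\dmin$ case produces. Because we are now decoding with radius $r \approx (1+\eps)\dmin/2$ rather than exactly $\dmin/2$, the set of good shifts for a fixed codeword $c$ is the intersection $B(c-x_0,r)\cap B(c-x_1,r)$, which by translation invariance equals $B(0,r)\cap B(z,r)$ with $\|z\| = \dmin$. I would count these via the standard split of $y$ into its coordinates inside and outside the support of $z$, writing the intersection as a double sum $\sum_{a,b}\binom{\dmin}{a}\binom{m-\dmin}{b}$ over $(a,b)$ satisfying $a+b\le r$ and $\dmin-a+b\le r$, and then identify the dominant $(a,b)$; with the correct choice of decoding parameters (and, in the GV version, a corresponding ensemble of codes whose distance matches $d'_k$), the inner binomial sum contributes a factor $\approx 2^{\dmin H_{2}((1-\eps)/2)}$, which is exactly the desired term. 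Putting this ball-intersection estimate together with the $2^{\kappa_{\mathsmaller{Z}}(d'_{k^\star}/m)\cdot m}$ codewords and dividing by $2^m$ yields the claimed success probability per shift; standard Chernoff-style repetition and a union bound over the $K$ iterations boost the success probability to $1-1/n$, and the GV-code version of the bound follows by replacing $\kappa_{\mathsmaller{Z}}$ with $H_2$ throughout.
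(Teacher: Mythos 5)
Your overall structure — geometrically spaced candidate distances, per-iteration monotonicity, stopping as soon as a successful candidate $d'_{k^\star}\in[\dmin,(1+\eps)\dmin]$ is reached, and charging the total cost to iteration $k^\star$ times the $O(\eps^{-1}\log m)$ number of iterations — is essentially the paper's Algorithm~\ref{alg:search}. However, there is a genuine gap in your characterization of a good shift. You define the good shifts for a codeword $c$ to be $B(c-x_0,r)\cap B(c-x_1,r)$, but once the decoding radius exceeds $\dmin/2$ this intersection contains shifts that are \emph{not} good: if $y+x_0$ and $y+x_1$ both lie within distance $r$ of $c$ but neither is within $\dmin-r$, then a third $y+x_i$ with $\dist{x_0,x_i}\le 2r$ (and $d_2$ can be as small as $\dmin+1\le 2r$) can also decode to $c$, and after sorting the block of ties, $x_0$ and $x_1$ need not be adjacent. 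This is precisely why the paper's Lemma~\ref{lem:search} restricts the ``middle'' set to exact distances summing to $\dmin$ — $\textsc{mid}_1=\{z:\dist{x_0,z}=r,\ \dist{x_1,z}=\dmin-r\}$ and symmetrically $\textsc{mid}_2$ — so that the triangle inequality $\dist{x_1,x_i}\le(\dmin-r)+r=\dmin$ rules out a third colliding vector. Your double sum over $a+b\le r$, $\dmin-a+b\le r$ also misidentifies the dominant term: the maximizer is near $a=\dmin/2$, $b\approx\eps\dmin/2$, giving a contribution $\binom{\dmin}{\dmin/2}\binom{m-\dmin}{\eps\dmin/2}\gg 2^{\dmin H_2((1-\eps)/2)}$, and these $(a,b)$ correspond exactly to the unsafe shifts above. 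The correct lower bound comes from restricting to the boundary term $a=r$, $b=0$, which is $|\textsc{mid}_1|=\binom{\dmin}{r}\approx 2^{\dmin H_2((1-\eps)/2)}$, together with the triangle-inequality argument that these shifts really are good. You should make that restriction and justification explicit rather than appealing to the full ball intersection.

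A secondary issue: you need to state the stopping rule. Since the per-iteration cost grows with $d'_k$ (and for $d'_k$ much larger than $\dmin$ the triangle-inequality uniqueness argument breaks entirely), running all $K$ iterations and taking the best pair would be dominated by the last, useless iteration. The paper's Algorithm~\ref{alg:search} stops as soon as the distance returned by Algorithm~\ref{alg:main} is at most $2r$, which, combined with the fact that Algorithm~\ref{alg:main} never returns a value below $\dmin$, guarantees termination at or before $k^\star$; your writeup should include something equivalent.
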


\medskip
\paragraph{Gapped version.}
Intuitively, if there is a gap between $\dmin$ and the second minimum distance, the Closest Pair Problem 
should be easier. This is reminiscent of the case of the $(1+\epsilon)$-Approximate NNS Problem
versus the NNS Problem.
However, as we still need to find the \emph{exact} solution to the Closest Pair Problem,
the situation here is different.

\begin{theorem}[Gapped version]\label{thm:general_rand2'}
Let $x_0, x_1, \ldots, x_{n-1}$ in $\{0,1\}^{m}$ be $n$ binary vectors such that 
$x_0$ and $x_1$ is the unique pair achieving the minimum pairwise distance $\dmin$.
Suppose we are given the values of $\dmin$ as well as the second minimum distance $d_2$.
Let $\delta := \dmin/m$ and $\delta' := d_2/m$.
Then there is a randomized algorithm
running in $O(n\log^{2}n\cdot 2^{(1-\kappa_{\mathsmaller{Z}}(\delta')
-\delta-(1-\delta)H_{2}(\frac{\delta' - \delta}{2(1-\delta)}))m} \cdot \poly{m})$ 
which finds the closest pair $x_0$ and $x_1$ with probability at least
$1-1/n^2$.  
Moreover, the running time can be further improved to 
$O(n\log^{2}n\cdot 2^{(H_{2}(\delta')
-\delta-(1-\delta)H_{2}(\frac{\delta'-\delta}{2(1-\delta)}))m} \cdot \poly{m})$, 
if we are given the black box access to the decoding algorithm of an $(m, K, d)$-code which meets
the Gilbert-Varshamov bound.
\end{theorem}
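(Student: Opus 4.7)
The plan is to mimic the proof of Theorem~\ref{thm:general_rand'} (the random-shift scheme of Algorithm~\ref{alg:main_intro}), but to select the underlying code $C$ based on the second minimum distance $d_2$ instead of $\dmin$, thereby exploiting the known gap to enlarge the decoding radius and boost the per-shift success probability while still maintaining correctness of the sort-and-check step.

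Concretely, I would take a binary code $C \subseteq \{0,1\}^m$ with unique-decoding radius $r := \lfloor (d_2 - 1)/2 \rfloor$ and minimum distance $d_C = 2r+1$; its rate can be taken to be $\kappa_{\mathsmaller{Z}}(\delta')$ via an explicit Zyablov-bound construction with an efficient decoder, or $\kappa_{\mathsmaller{GV}}(\delta') = 1 - H_2(\delta')$ via the assumed GV-code black-box, up to an $O(1/m)$ loss in the relative distance that is absorbed into the $\poly{m}$ factor. The defining property is $2r < d_2$: by the triangle inequality, no pair $(x_i, x_j)$ with $\dist{x_i, x_j} \geq d_2$ can simultaneously lie in a common ball $c + B_r(0)$ for any $c \in C$, so after any shift $y$ the only pair that can decode to a common codeword is the closest pair $(x_0, x_1)$. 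The sort-and-check step then correctly recovers $(x_0, x_1)$ whenever the shift succeeds in aligning them with a common codeword, exactly as in Theorem~\ref{thm:general_rand'}.

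The core of the proof is bounding the probability that a uniformly random $y \in \{0,1\}^m$ is \emph{good}, i.e., that $y + x_0$ and $y + x_1$ both land in a common decoding ball. Since the decoding balls $\{c + B_r(0)\}_{c \in C}$ are pairwise disjoint, this probability equals $|C| \cdot |B_r(x_0) \cap B_r(x_1)|/2^m$. To evaluate the intersection I would partition the $m$ coordinates into the $\dmin$ positions where $x_0$ and $x_1$ differ and the $m - \dmin$ positions where they agree; parameterizing a point in the intersection by $\alpha$ (flips on the differing coordinates) and $\beta$ (weight on the agreeing coordinates), the two ball constraints become $\alpha + \beta \leq r$ and $(\dmin - \alpha) + \beta \leq r$. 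The dominant contribution sits at $\alpha \approx \dmin/2$ and $\beta \leq r - \dmin/2 \approx (\delta' - \delta) m/2$, yielding
\[
|B_r(x_0) \cap B_r(x_1)| = 2^{\,\dmin + (m - \dmin)\, H_2\!\left(\frac{\delta' - \delta}{2(1-\delta)}\right)\, \pm\, o(m)}.
\]
Multiplying by $|C| = 2^{Rm}$ with $R \in \{\kappa_{\mathsmaller{Z}}(\delta'),\, 1 - H_2(\delta')\}$ and dividing by $2^m$ gives per-shift success probabilities whose reciprocals match the two exponents in the theorem statement.

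Finally, I would repeat the random-shift trial $O(\log n)$ times to amplify the success probability to $1 - 1/n^2$; each trial costs $O(n \log n \cdot \poly{m})$ for black-box decoding, sorting the decoded list, and checking adjacent pairs, producing the claimed overall running times. I expect the main technical obstacle to be the dominant-term estimate for $|B_r(x_0) \cap B_r(x_1)|$ and a careful treatment of the parities of $\dmin$ and $d_2$ so that the constraints $2r \geq \dmin$ and $2r < d_2$ hold simultaneously without losing anything beyond the $\poly{m}$ factor; everything else is a direct adaptation of the argument behind Theorem~\ref{thm:general_rand'}.
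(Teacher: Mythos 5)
Your proposal matches the paper's proof in its essential structure: pick a code whose decoding radius is close to $d_2/2$, argue via the triangle inequality that only $x_0,x_1$ can decode to a common codeword, and count good shifts by lower-bounding the number of $z$ with $\dist{x_0,z}\le r$ and $\dist{x_1,z}\le r$ (the paper's set $\textsc{mid}_G$) using the same two-coordinate parametrization you give; the paper's $i,j$ relate to your $\alpha,\beta$ by $\alpha=\dmin-i$, $\beta=j$, and the paper lower-bounds the double sum by the single term $i=\dmin/2$, $j=r-\dmin/2$, i.e.\ $\binom{\dmin}{\dmin/2}\binom{m-\dmin}{r-\dmin/2}$, which is your estimate up to a $\poly{m}$ factor. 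The one place your write-up is misstated is the final repetition count: a single random shift succeeds with probability roughly $2^{-(1-\kappa_{Z}(\delta')-\delta-(1-\delta)H_2(\frac{\delta'-\delta}{2(1-\delta)}))m}$, not $\Omega(1)$, so the algorithm needs $O(\log n)$ divided by that probability many shifts, not $O(\log n)$ shifts; since you already computed the per-shift probability correctly and state the correct final running time, this looks like a slip rather than a conceptual gap.
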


Our gapped version algorithm uses $d_2/2$ instead of $\dmin/2$ as the decoding radius. 
This, however, does not always give improved running time as illustrated in Figure~\ref{fig:3}.
In Figure~\ref{fig:3}, we set $\delta' = (1+\eps)\delta$ and write the running time as
$O(n\log^{2}n\cdot 2^{\gamma m} \cdot \poly{m})$ for both the gapped version (the blue line)
and the non-gapped version (the green line). 
One can see that using $d_2/2$ as the decoding radius does not always yield the best running time.
Indeed, this is the case only when $\eps$ is small enough. Our numerical calculations show that there exists an optimal decoding radius
$d_{opt}/2$ (which corresponds to the minimum point in the blue line) slightly larger than $\dmin/2$ such that
whenever $d_2\geq d_{opt}$ using $d_{opt}/2$ as the decoding radius will achieve the fastest running time.
Unfortunately we do not know how to calculate this $d_{opt}/2$ analytically.

%\begin{figure}
%	\centering
%	\includegraphics[width=\textwidth]{pictures/gapvsnon2.png}
%	\caption{The range of $\epsilon$ in which gapped version outperforms non-gapped version}
%\end{figure}

\begin{figure}
	\centering
	\includegraphics[scale=1]{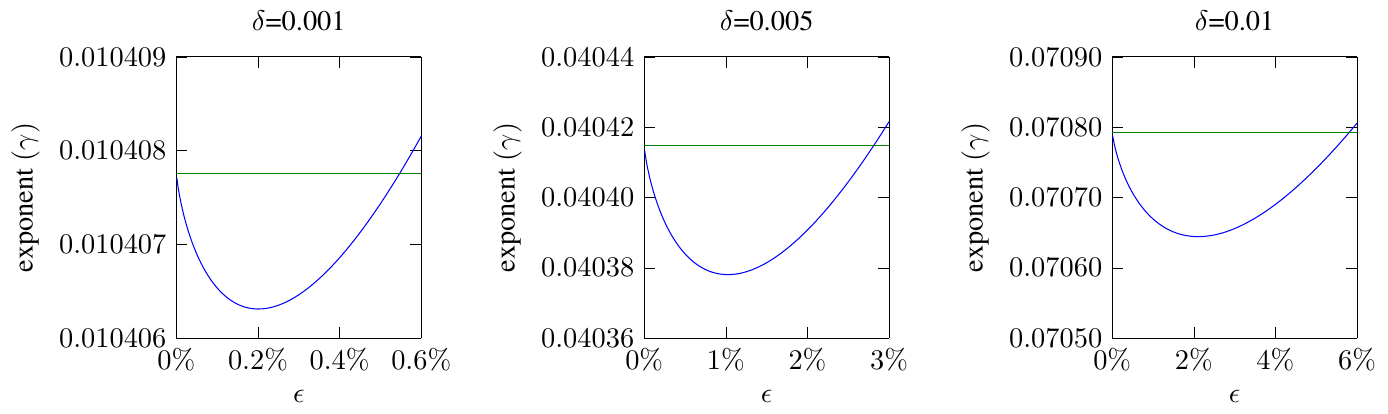}
	\caption{The range of $\epsilon$ in which gapped version outperforms non-gapped version}
	\label{fig:3}
\end{figure}

\subsubsection{The Light Bulb Problem}
Applying our algorithms for the Closest Pair Problem to the Light Bulb Problem easily yields the following theorem.
\begin{theorem}
There is a randomized algorithm for the Light Bulb Problem which runs in time
\[
O(n\cdot \poly{\log n})\cdot 2^{(1-\kappa_{\mathsmaller{Z}}(\frac{1-\rho}{2})-\frac{1-\rho}{2})
\frac{4\ln2\cdot \log n}{\rho^2}(1+o(1))}
\]
and succeeds with probability at least
$1-1/n^2$. 
The running time can be further improved to 
\[
O(n\cdot \poly{\log n})\cdot 2^{(H_{2}(\frac{1-\rho}{2})-\frac{1-\rho}{2})
\frac{4\ln2\cdot \log n}{\rho^2}(1+o(1))},
\]
if we are allowed a one-time preprocessing time\footnote{This is because 
the block length of the code is $m=4\ln{2}\log{n}/\rho^2 < 2.773 \log{n}/\rho^2$ and preprocessing the code 
requires $O(2^m)=O(n^{2.773/\rho^2})$ time.} of $n^{2.773/\rho^2}$ to generate the decoding
lookup table of a random Gilbert's $(m,K,(1-\rho)m/2)$-code.
Similar results can also be obtained for deterministic algorithms.
\end{theorem}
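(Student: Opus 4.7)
The plan is to reduce the Light Bulb Problem to an (almost) worst-case instance of the Closest Pair Problem and then invoke Theorem~\ref{thm:general_rand'} (for the randomized bounds) and Theorem~\ref{thm:general_det'} (for the deterministic variant). The block length is fixed at
\[
m \;=\; \frac{4\ln 2\cdot\log n}{\rho^{2}}(1+o(1)),
\]
which is the smallest value of $m$ for which the two relevant Hamming-distance distributions separate across all pairs with high probability.

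First I would apply a standard Hoeffding/Chernoff concentration argument. The Hamming distance of the planted pair is distributed as $\mathrm{Bin}\!\bigl(m,\tfrac{1-\rho}{2}\bigr)$ with mean $\tfrac{1-\rho}{2}m$, whereas every other pair's Hamming distance is $\mathrm{Bin}\!\bigl(m,\tfrac12\bigr)$ with mean $\tfrac{m}{2}$. Choosing a deviation threshold $t=o(\rho m)$ that is still $\omega\!\bigl(\sqrt{m\log n}\bigr)$, each pair's distance lies within $t$ of its mean with probability at least $1-n^{-\Omega(1)}$. Taking a union bound over all $\binom{n}{2}$ pairs shows that, with probability at least $1-1/n^{3}$,
\[
\dist{x_{0},x_{1}}\;\leq\;\tfrac{1-\rho}{2}m(1+o(1))\qquad\text{and}\qquad \min_{\{i,j\}\neq\{0,1\}}\dist{x_{i},x_{j}}\;\geq\;\tfrac{m}{2}\bigl(1-o(1)\bigr).
\]
The specific constant $4\ln 2$ in the choice of $m$ is obtained by equating the Hoeffding exponent $\rho^{2}m/2$ with $(2+o(1))\ln n$ so that the union bound survives. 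In particular the planted pair is the unique closest pair, and $\delta:=\dmin/m=\tfrac{1-\rho}{2}(1+o(1))$.

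Second, I would feed this reduced instance into Theorem~\ref{thm:general_rand'}. Substituting the above $\delta$ and $m$ into the running time $O(n\log^{2}n\cdot 2^{(1-\kappa_{\mathsmaller{Z}}(\delta)-\delta)m}\cdot\poly{m})$, and using the continuity of $\kappa_{\mathsmaller{Z}}$ on $[0,1/2]$ to absorb the $(1+o(1))$ slack into a single $(1+o(1))$ in the exponent, yields the first displayed bound; note that $m=\Theta(\log n)$ forces $\poly{m}=\polylog{n}$, which together with $n\log^{2}n$ gives the claimed $O(n\cdot\poly{\log n})$ prefactor. For the GV-code variant one pays an additional one-time cost of $2^{O(m)}=n^{O(1/\rho^{2})}$ to build the decoding lookup table of a random Varshamov code, which is exactly the $n^{2.773/\rho^{2}}$ preprocessing mentioned in the footnote; thereafter the per-decoding cost is $\poly{m}$ and the $H_{2}$-based running time follows identically. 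The deterministic bounds are obtained in the same manner from Theorem~\ref{thm:general_det'}.

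The main obstacle, beyond setting up the correct concentration bound, will be carefully bookkeeping the $(1+o(1))$ factors so that (i) the concentration failure probability stays below $1/n^{2}$ after the union bound, and simultaneously (ii) the slack in $\delta$ is absorbed into a single $(1+o(1))$ inside the exponent of the final runtime. This requires only the continuity of $\kappa_{\mathsmaller{Z}}$ and $H_{2}$ at $\tfrac{1-\rho}{2}$, together with the fact that $m=\Theta(\log n/\rho^{2})$, so that multiplicative $o(1)$ perturbations of either $\delta$ or $m$ contribute only an $n^{o(1)}$ factor to the overall runtime and can be hidden in the $(1+o(1))$ of the exponent.
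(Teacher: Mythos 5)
The overall plan — reduce the Light Bulb Problem to the Closest Pair Problem via random sampling of $m=\frac{4\ln 2\cdot\log n}{\rho^2}(1+o(1))$ coordinates and then invoke Theorem~\ref{thm:general_rand'} / Theorem~\ref{thm:general_det'} — is the paper's approach. However, your concentration argument has a genuine flaw at its core, and as a consequence you miss a second ingredient that the paper needs.

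You ask for a deviation threshold $t$ that is simultaneously $o(\rho m)$ (so the $o(1)$ slack in $\delta$ can be absorbed) and $\omega(\sqrt{m\log n})$ (so that each individual concentration failure probability is $n^{-\Omega(1)}$ and the union bound over $\binom{n}{2}$ pairs goes through). With $m=\frac{4\ln 2\cdot\log n}{\rho^2}(1+o(1))$ these two quantities are in fact of the same order: $\rho m=\Theta(\log n/\rho)$ and $\sqrt{m\log n}=\Theta(\log n/\rho)$, with ratio $\rho m/\sqrt{m\log n}\to 2\sqrt{\ln 2}\approx 1.665$, a \emph{constant}. So no such $t$ exists, and the claimed failure probability $1/n^3$ cannot be achieved. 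The constant $4\ln 2$ is tuned precisely so that the \emph{one-sided} Chernoff bound for the $B(m,1/2)$ pairs gives failure probability $\approx 1/(2n^2)$ per pair, which after a union bound leaves only a \emph{constant} success probability ($\ge 1/2$), not $1-1/\poly{n}$; there is no slack left for the planted pair's own fluctuation.

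The paper handles the planted pair differently and then boosts. For the planted pair it does not use a concentration inequality at all: it invokes the fact that the median of $B(m,\frac{1-\rho}{2})$ is (essentially) its mean $d_t=(1-\rho)m/2$, so $\Pr(\dist{x_0,x_1}<d_t)\ge 1/2$ directly — this is Fact~\ref{fact:binom}. Combined with the one-sided Chernoff bound for the other pairs and a union bound, the reduction succeeds with probability only $\ge 1/4$ (Theorem~\ref{thm:sample_size} says ``constant probability''). The high-probability guarantee in the theorem statement is then obtained by running the whole procedure $O(\log n)$ times on \emph{fresh, independent} coordinate samples and taking a majority vote. Your write-up lacks both the median step and this repetition/boosting step; without them the claimed $1-1/n^2$ success probability is not established, and the concentration claim as written is false.
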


Our deterministic algorithm for the Light Bulb Problem performs faster than Alman's deterministic algorithm~\cite{Alm19}  when the Pearson-correlation coefficient $\rho$ is very large.
Moreover, we believe that our algorithms are very simple and therefore are likely to outperform other complicated ones
for at least not too large input sizes.
% Numerical calculations show that our new algorithm performs better
% than the state-of-the-art result by Karppa~\etal~\cite{KKK16} (which runs in $n^{1.582}$) 
% when $\rho \geq 0.9967$ (equivalently when
% $\delta  < 0.0016$). Moreover, if a one-time $n^{2.773/ \rho^2}$
% preprocessing is allowed, then our algorithm runs faster for all $\rho \geq 0.9310$ 
% (equivalently for all  $\delta < 0.0345$).

\subsection{Related work}\label{sec:related_work}

\paragraph{The Nearest Neighbor Search problem.}
The Closest Pair Problem is a special case of the more general \emph{Nearest Neighbor Search} (NNS) problem, 
defined as follows. Given a set $S$ of $n$ vectors in $\{0,1\}^m$, 
and a query point $q\in \{0,1\}^m$ as input, the problem is to find a point in $S$ which is closest to $q$. 
The performance of an NNS algorithm is usually measured by two parameters: 
the space (which is usually proportional to the preprocessing time) and the query time.
It is easy to see that any algorithms for NNS can also be used to solve the Closest Pair problem, 
as we can try each vector in $S$ as the query vector against the remaining vectors in $S$, 
and output the pair with minimum distance.  

Most early work on this problem is for fixed dimension. 
Indeed, when $m=1$ the problem is easy, as we can just sort the input vectors (which in this case are numbers), then perform a binary search to find the closest vector to the input query. 
For $m \geq 2$, Clarkson~\cite{Cla88} gave an algorithm with query time polynomial in $m \log n$, 
and space complexity $O(n^{\left \lceil{m/2}\right \rceil})$. 
Meiser~\cite{Mei93} designed an algorithm which 
runs in $O(m^5 \log n)$ time and uses $O(n^{m+\epsilon})$ space for arbitrary $\epsilon > 0$. 
By far, all efficient data structures for NNS have dimension $m$ appear in the exponent of the space complexity, 
due to the \emph{curse of dimensionality}.

This motivated people to introduce a relaxed version of Nearest Neighbor Search called 
the $(1+\epsilon)$-\emph{Approximate Nearest Neighbor Search}  ($(1+\epsilon)$-\emph{Approximate} NNS) Problem in the 1990s.
The problem now is, for an input query point $q$, find a point $p$ in $S$ such that  
the Hamming distance is: \\
\[
\dist{p,q} \leq (1+\epsilon)\min_{p'\in S}\dist{p',q}. 
\]
We call such a $p$ as a $(1+\epsilon)$-approximate nearest neighbor of input query $q$. \\ 
The $(1+\epsilon)$-Approximate NNS Problem has been studied extensively in the last two decades.
In 1998, Indyk and Motwani~\cite{IM98} used a set of hash functions to store the dataset such that if two points are close enough, 
they will have a very high probability to be hashed into the same buckets. 
As a pair of close points have higher probability than a pair of far-apart points to fall into the same bucket, 
the scheme is called \emph{locality sensitive hashing} (LSH). 
The query time of LSH is $O(n^{\frac{1}{1+\epsilon}})$, which is sublinear, 
and the space complexity of LSH is $O(n^{1+\frac{1}{1+\epsilon}})$, which is subquadratic. 
After Indyk and Motwani introducing the locality sensitive hashing, 
there have been many improvements on the parameters under different metric spaces, 
such as $\ell_p$ metric~\cite{KOR98, DIIM04, AI08, OWZ14, MNP06}. 
Recently, Andoni~\etal~\cite{ALRW17} gave tight upper and lower bounds on the time-space trade-offs of (data-dependent) hashing based algorithms for the $(1+\epsilon)$-Approximate NNS Problem. 
This is the first algorithm that achieves sublinear query time and near-linear space, 
for any $\epsilon > 0$. 
For many results on the Approximate NNS problem in high dimension, see e.g.~\cite{Ind04} for a survey. 
Some algorithms for the low dimension problem are surveyed in~\cite{AM05}.

% However, when $\epsilon$ tends to $0$, the query time still goes to linear. 
% Therefore in this case, the result in our paper is faster.

In 2012 Valiant~\cite{Val15} leveraged fast matrix multiplication to obtain a new algorithm 
for the $(1+\epsilon)$-Approximate NNS Problem that is not based on LSH. \footnote{In fact, Valiant's algorithm can handle polynomially many ``outlier'' pairs.}
The general setting of Valiant's results is the following.
Suppose there is a set of points $S$ in $m$-dimensional Euclidean (or Hamming) space, 
and we are promised that for any $a \in S$ and $b \in S$,  $\langle a,b\rangle < \alpha$, 
except for only one pair which has $\langle a,b\rangle \geq \beta$ 
(which corresponds to the closest pair, and $\beta$ is known as the Pearson-correlation coefficient), 
for some $0<\alpha<\beta<1$.  
Valiant's algorithm finds the closest pair in 
$n^{\frac{5-\omega}{4-\omega}+\omega \frac{\log \beta}{\log \alpha}} m^{O(1)}$ time, 
where $\omega$ is the exponent for fast matrix multiplication ($\omega < 2.373$). 
Notice that, if the Pearson-correlation coefficient $\beta$ is some fixed constant, then 
when $\alpha$ approaches $0$ the running time tends to $n^{\frac{5-\omega}{4-\omega}}$, 
which is less than $n^{1.62}$. 
Valiant applied his algorithms to get improved bounds\footnote{All these
results are due to the fact that Valiant's algorithms are much more robust to weak correlations than other algorithms. Our algorithms therefore do not give improved bounds for these learning problems in the general settings.} 
for the Learning Sparse Parities with Noise Problem,
the Learning $k$-Juntas with Noise Problem, the Learning $k$-Juntas without Noise Problem, and so on.
More recently, Karppa~\etal~\cite{KKK16} improved upon Valiant's algorithm and obtained an algorithm that runs 
in $n^{\frac{2 \omega}{3}+O(\frac{\log \beta}{\log \alpha})}m^{O(1)}$ time. 
% Obviously, their result require a gap between closest pair and second closest pair, otherwise as $\alpha$ approaches to $\beta$, their result will become trivial.

Note that, in general, algorithms for the $(1+\epsilon)$-Approximate NNS can only be applied to the gapped version of the Closest Pair Problem; 
for non-gapped version, as the minimum distant and the second minimum distant can differ by $1$, 
which means that the approximation parameter $\epsilon$ tends to zero if the minimum distance is large, 
the running time will approach to quadratic. However, our non-gapped version algorithm still runs in truly subquadratic time in this case.

\medskip
\paragraph{Decoding Random Binary Linear Codes.}
In 2015, May and Ozerov~\cite{MO15} observed that algorithms for high dimensional 
Nearest Neighbor Search Problem can be used to speedup the approximate matching part of the \emph{information
set decoding} algorithm. 
They designed a new algorithm for the \emph{Bichromatic Hamming Closest Pair} problem when the two input
lists of vectors are pairwise independent, 
and consequently obtained a decoding algorithm for random binary linear codes with time complexity $2^{0.097n}$. 
This improved upon the previously best result of Becker~\etal~\cite{BJM+12} which runs in $2^{0.102n}$.

\medskip
\paragraph{The Bichromatic Hamming Closest Pair problem.}
In fact, the problem studied in~\cite{AW15,ACW16,MO15} is the following
\emph{Bichromatic Hamming Closest Pair Problem}: % first introduced in~\cite{Ale03}:  
we are given $n$ red vectors $R= \{r_0, r_1, \cdots , r_{n-1} \}$ and 
$n$ blue vectors $B=\{ b_0, b_1, \cdots , b_{n-1} \}$ from $\{0,1\}^m$, 
and the goal is to find a red-blue pair with minimum Hamming distance. 
It is easy to see that the Closest Pair Problem is reducible to the Bichromatic Hamming Closest Pair Problem
via a random reduction.
In fact, our algorithm for the Closest Pair Problem can also be easily adapted to solve the Bichromatic Hamming Closest Pair Problem 
as follows.
Run the decoding part of our algorithm on both sets $R$ and $B$ to get 
$\tilde{R} = \{\tilde{r}_0, \tilde{r}_1, \cdots , \tilde{r}_{n-1} \}$ 
and $\tilde{B} = \{\tilde{b}_0, \tilde{b}_1, \cdots , \tilde{b}_{n-1} \}$,  
sort $\tilde{R}$ and $\tilde{B}$ separately (without comparing the original vectors 
for adjacent pairs in the sorted lists), 
then merge the two sorted lists into one, and compute the distance between the original vectors 
for each red-blue pair of vectors that are compared during the merging process.
On the other hand, the Bichromatic Closest Pair Problem is unlikely to have \emph{truly} subquadratic algorithms
under some mild conditions. 
Assuming the Strong Exponential Time Hypothesis (SETH), for any $\eps>0$, there exists a constant $c$ such that when the dimension $m=c\log{n}$, then there is no $2^{o(m)} \cdot n^{2-\epsilon}$-time algorithm for the Bichromatic Closest Pair Problem~\cite{AW15, ARW17, Wil18}.

%%%%%%%%%%%%%%%%%%%%%%%%%%%%%%%%%%%%%%%%%%%%%%%%%%%%%%%%%%%%%%%%%%%%%%%%%%%%%%%%%%%%%
\subsection{Organization}
% The rest of the paper is organized as follows. 
% Preliminaries and notations that we use throughout the paper are summarized in Section~\ref{sec:prelim}. 
% In Section~\ref{sec:main} we present our main decoding-based algorithms for the Closest Pair Problem, 
% assuming the minimum pairwise distance is given.
% We show how to get rid of this assumption in Section~\ref{sec:search}.
% We then apply our new algorithms to study the Light Bulb Problem in Section~\ref{sec:light_bulb}.
% Now we conclude with several open problems.

% Due to space constraint, some of the proofs are omitted from the main body of the paper 
% and can be found in the Appendix.

%Due to space constraints, we delegate the remaining sections of this paper to the Appendices.
%Specifically, preliminaries and notations that we use throughout the paper 

The rest of the paper is organized as follows.
Preliminaries and notations that we use throughout the paper 
are summarized in Section~\ref{sec:prelim}. 
In Section~\ref{sec:main} we present our main decoding-based algorithms for the Closest Pair Problem, 
assuming the minimum pairwise distance is given.
We then show how to get rid of this assumption in Section~\ref{sec:search}.
In Section~\ref{sec:light_bulb}, we apply our new algorithms to study the Light Bulb Problem.
Finally, we conclude with several open problems in Section~\ref{sec:conclusion}.

%%%%%%%%%%%%%%%%%%%%%%%%%%%%%%%%%%%%%%%%%%%%%%%%%%%%%%%%%%%%%%%%%%%%%%%%%%%%%%%%%%%%%
%%%%%%%%%%%%%%%%%%%%%%%%%%%%%%%%%%%%%%%%%%%%%%%%%%%%%%%%%%%%%%%%%%%%%%%%%%%%%%%%%%%%%
\section{Preliminaries}\label{sec:prelim}
Let $m\geq 1$ be a natural number, we use $[m]$ to denote the set $\{1,\ldots, m\}$.
All logarithms in this paper are base 2 unless specified otherwise. 

The binary entropy function, denoted $H_{2}(p)$, is defined as $H_{2}(p) := -p\log{p}-(1-p)\log(1-p)$
for $0\leq p \leq 1$. 

Let $\F_{q}$ be a finite field with $q$ elements\footnote{When $q=2$, we use
$\F_2$ and $\{0,1\}$ interchangeably throughout the paper.} and $m\geq 1$ be a natural number. 
If $x \in \F_{q}^{m}$ is an $m$-dimensional vector over $\F_q$ and $i\in [m]$, then we use $(x)_i$ to denote
the $\ith{i}$ coordinate of $x$.
The \emph{Hamming distance} between two vectors $x, y\in \F_{q}^{m}$ is the number of coordinates at which they differ:
$\dist{x,y}=|\{i \in [m]: (x)_i \neq (y)_i\}|$. 
For a vector $x \in \F^{m}$ and a real number $r\geq 0$, the Hamming ball of radius $r$ around $x$ is 
$B(x,r)=\{y\in \F^{m}: \dist{x,y}\leq r\}$.
The \emph{weight} of a vector $x$, denoted $\wt{x}$, is the number of coordinates at which $(x)_i\neq 0$.
The distance between two vectors $x$ and $y$ is easily seen to be equal to $\wt{x-y}$.

We also need the following bounds on binomial coefficients, see e.g. ~\cite[p. 309]{MS77}.
\begin{lemma}\label{lem:binomial}
Let $n$ be a natural number and $\lambda n$ be an integer, where $0<\lambda<1$. Then
\[
\frac{1}{\sqrt{8n\lambda(1-\lambda)}}2^{nH_2(\lambda)} \leq \binom{n}{\lambda n} \leq \frac{1}{\sqrt{2\pi n\lambda(1-\lambda)}}2^{nH_2(\lambda)}.
\]

\end{lemma}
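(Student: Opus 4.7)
The plan is to invoke a refined version of Stirling's formula, such as Robbins's two-sided bound
\[
\sqrt{2\pi n}\left(\tfrac{n}{e}\right)^n e^{1/(12n+1)} \;\le\; n! \;\le\; \sqrt{2\pi n}\left(\tfrac{n}{e}\right)^n e^{1/(12n)},
\]
and apply it to each of the three factorials in $\binom{n}{\lambda n} = n!/((\lambda n)!\,((1-\lambda)n)!)$. After multiplying out, the $(n/e)^n$-type terms combine into $\lambda^{-\lambda n}(1-\lambda)^{-(1-\lambda)n}$, which is exactly $2^{nH_2(\lambda)}$ by the definition of the binary entropy; and the $\sqrt{2\pi}$-factors collapse into $1/\sqrt{2\pi n\lambda(1-\lambda)}$. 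What remains in each direction is a single ``Stirling residual'' of the form $\exp(\pm[\cdots])$ that must be bounded by the appropriate constant.

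For the upper bound I would apply the upper Stirling bound to $n!$ and the lower bound to each factorial in the denominator. This leaves the residual
\[
\exp\!\left(\tfrac{1}{12n} - \tfrac{1}{12\lambda n + 1} - \tfrac{1}{12(1-\lambda)n + 1}\right),
\]
and a short arithmetic check (writing $a = 12\lambda n + 1$, $b = 12(1-\lambda)n + 1$ and using $\tfrac{1}{a}+\tfrac{1}{b} \ge \tfrac{4}{a+b} = \tfrac{4}{12n+2} > \tfrac{1}{12n}$) shows this factor is at most $1$, which yields the claimed upper bound directly.

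For the lower bound the Stirling bounds are used in the opposite directions, producing the residual
\[
\exp\!\left(\tfrac{1}{12n+1} - \tfrac{1}{12\lambda n} - \tfrac{1}{12(1-\lambda)n}\right).
\]
The task reduces to showing this residual is at least $\sqrt{2\pi}/\sqrt{8} = \sqrt{\pi}/2$, i.e. that $\tfrac{1}{12\lambda n} + \tfrac{1}{12(1-\lambda)n} - \tfrac{1}{12n+1} \le \ln(2/\sqrt{\pi})\approx 0.121$. In the ``bulk'' regime where both $\lambda n$ and $(1-\lambda)n$ are moderately large, the left-hand side is manifestly tiny and the inequality is easy.

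The main obstacle I expect is the extreme edge case $\lambda n = 1$ (or symmetrically $(1-\lambda)n = 1$), since that is where the Stirling error on the factor $(\lambda n)!$ is largest and the residual comes closest to $\sqrt{\pi}/2$. I plan to handle this case by verifying the inequality directly: since $\binom{n}{1} = n$ and $2^{nH_2(1/n)} = n\bigl(\tfrac{n}{n-1}\bigr)^{n-1} \le e n$, the desired lower bound becomes $n \ge e n/\sqrt{8(1-1/n)}$, which indeed holds for all $n\ge 2$; the symmetric case is identical. Having disposed of this boundary case, the general-$\lambda$ calculation above gives the full bound.
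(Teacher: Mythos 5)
Your overall strategy---Robbins's two-sided Stirling bounds applied to the three factorials, reducing both directions to bounding a small exponential residual---is the standard route to this inequality, and it is the natural one given that the paper itself simply cites MacWilliams and Sloane (p.~309) for the lemma rather than proving it. Your treatment of the upper bound is correct: the AM-HM step $\tfrac{1}{a}+\tfrac{1}{b}\ge\tfrac{4}{a+b}=\tfrac{4}{12n+2}>\tfrac{1}{12n}$ is exactly what is needed.

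The lower bound, however, contains a genuine error at the edge case. You correctly identify that the Robbins residual can fail only when $\lambda n$ (or $(1-\lambda)n$) equals $1$---in fact only at $n=2$, $\lambda=1/2$, where the residual magnitude $\tfrac{1}{12}+\tfrac{1}{12}-\tfrac{1}{25}\approx 0.1267$ exceeds the allowed $\ln(2/\sqrt{\pi})\approx 0.1208$. But your direct verification of that case is wrong: you claim $n\ge en/\sqrt{8(1-1/n)}$ for all $n\ge 2$, which is equivalent to $\sqrt{8(1-1/n)}\ge e$, and this is \emph{false} for $n\le 13$ (e.g.\ at $n=2$ the left side is $2<e$). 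The problem is that replacing $2^{nH_2(1/n)}=n\bigl(\tfrac{n}{n-1}\bigr)^{n-1}$ by the coarse upper bound $en$ throws away too much. What you actually need is the tighter inequality $\bigl(\tfrac{n}{n-1}\bigr)^{n-1}\le\sqrt{8(1-1/n)}$, which does hold for every $n\ge 2$ (with equality at $n=2$, where the lemma's lower bound is itself tight: $\binom{2}{1}=2=\tfrac{1}{\sqrt{4}}\cdot 2^{2}$); alternatively, since the Robbins residual only fails at $n=2$, you may just check $\binom{2}{1}=2$ against the claimed bound directly and leave the rest to the general calculation. Either fix closes the gap; as written, the boundary case your proof singles out as the ``main obstacle'' is precisely the one it mishandles.
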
 

%%%%%%%%%%%%%%%%%%%%%%%%%%%%%%%%%%%%%%%%%%%%%%%%%%%%%%%%%%%%%%%%%%%%%%%%%%%%%%%%%%%%%
\subsection{Error correcting codes}

\begin{definition}[Error correcting codes]
Let $\F_q$ be a finite field with $q$ elements\footnote{In fact, error correcting codes, as well as constructing new codes out of 
existing codes by concatenations to be discussed shortly, can be defined more generally 
over an arbitrary set of $q$ distinct elements called \emph{alphabet} of the code. 
For the purpose of designing algorithms in this paper,
restricting to finite fields is simpler and sufficient.} and let $m\geq 1$ be a natural number.
A subset $C$ of $\F_q^{m}$ is called an $(m, K, d)_{q}$-\emph{code} if $|C|=K$ and for any two distinct vectors 
$x, y\in C$, $\dist{x, y}\geq d$. 
The vectors in $C$ are called \emph{codewords} of $C$, $m$ the block length of $C$, 
and $d$ the \emph{minimum distance} of $C$.
\end{definition}

Normalized by the block length $m$, $\kappa(C) := (\log_{q}{K})/m$ is known as the \emph{rate} of $C$
and $\delta(C) := d/m$ is known as the \emph{relative distance} of $C$.
If $C$ is a linear subspace of $\F_q^{m}$ of dimension $k$, the code is called a \emph{linear code} 
and denoted by $[m,k,d]_{q}$.
It is convenient to view such a linear code as the image of an \emph{encoding function} $E: \F_q^{k} \to \F_q^{m}$,
and $k$ is called \emph{message length} of $C$.
This can be generalized to non-linear codes as well where we view 
$\lfloor \log_{q}K \rfloor$ as the effective message length.
%% \sidenote{Do we really need to define encoding functions? Originally I wrote this way for defining concatenated codes.}
We usually drop the subscript $q$ when $q=2$.

\begin{definition}[Covering radius]
Let $C \subseteq \F_q^{m}$ be a code. For any $x\in \F^{m}$, 
define the distance between $x$ and $C$ to be
$\dist{x, C} := \min_{y\in C}\dist{x,y}$ (clearly, $\dist{x,C}=0$ if and only if $x$ is a codeword of $C$).
The \emph{covering radius} of a code $C$, denoted $R(C)$, 
is defined to be the maximum distance of any vector in $\F_q^{m}$ from $C$,
i.e., $R(C)=\max_{x\in \F_q^{m}}\dist{x,C}$.
\end{definition}

%%%%%%%%%%%%%%%%%%%%%%%%%%%%%%%%%%%%%%%%%%%%%%%%%%%%%%%%%%%%%%%%%%%%%%%%%%%%%%%%%%%%%
\subsubsection{Unique decoding}
Given an $(m, K, d)$-code $C$, if a vector (aka received word) $x\in \F_q^m$ is at a distance 
$r\leq \lfloor \frac{d-1}{2} \rfloor$ from some codeword $w$ in $C$, 
then by triangle inequality, $x$ is closer to $c$ than any other codewords in $C$. Therefore $x$
can be \emph{uniquely} decoded to the codeword $c\in C$. Such a decoding scheme\footnote{
Strictly speaking, the procedure described here is \emph{error correcting} instead of \emph{decoding},
where the latter should return the inverse of codeword $c$ of the encoding function.} 
is called \emph{unique decoding} (or minimum distance decoding) of code $C$,
and we shall call $\lfloor \frac{d-1}{2} \rfloor$ the (unique) \emph{decoding radius} of $C$.

%%%%%%%%%%%%%%%%%%%%%%%%%%%%%%%%%%%%%%%%%%%%%%%%%%%%%%%%%%%%%%%%%%%%%%%%%%%%%%%%%%%%%
\subsubsection{Gilbert-Varshamov bound and Gilbert's greedy code}
The Gilbert-Varshamov bound asserts that there is an infinite family of codes $C$ (essentially random codes 
or even random linear codes meet this bound almost surely) that satisfy 
$\kappa(C)\geq 1-H_2(\delta(C))$.
In particular, the following greedy algorithm of Gilbert~\cite{Gil52} finds a (non-linear) binary code $C$
of block length $m$ and minimum distance $d$ and 
satisfies that $\frac{1}{m}\log{K} \geq 1-H_{2}(d/m)-\eps$ for any $\eps>0$ for all sufficiently large $m$.
Start with $S=\F_2^m$ and $C=\emptyset$; while $S \neq \emptyset$, pick any element $x\in S$, add it to $C$
and remove all the elements in $B(x,d)$ from $S$.
We denote such a code by $\GV_{m,d}$.

We will need the following simple facts about $\GV_{m,d}$
\begin{lemma}\label{lem:GV-code}
The greedy algorithm of Gilbert can be implemented to run in $O(2^m)$ time, and produces
a decoding lookup table that supports constant time unique decoding. 
That is, for any $x\in \F_2^m$, if there is a codeword $w\in \GV_{m,d}$ with 
$\dist{x,w}\leq \lfloor \frac{d-1}{2} \rfloor$,
then the lookup entry of $x$ is $w$; otherwise the entry is a special symbol, say, $\perp$.
Moreover, the code $\GV_{m,d}$ constructed by Gilbert's greedy algorithm
satisfies that $R(\GV_{m,d})\leq d$.
\end{lemma}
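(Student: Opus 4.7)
The plan is to verify each of the three assertions in the lemma by directly book-keeping Gilbert's greedy procedure; none of the three requires more than elementary ball-counting and the triangle inequality.

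First, to bound the runtime I would represent $S$ as a bit-array of length $2^m$ indexed by elements of $\F_2^m$ and maintain a scan pointer so that the ``pick any $x\in S$'' step only moves forward through this array. The potentially expensive work in each iteration is enumerating $B(x,d)$ and removing those elements from $S$; I would control this by an amortized charging argument, observing that each $y\in\F_2^m$ is removed from $S$ at most once (either by being chosen as a codeword itself or by lying in the ball of some previously chosen codeword). Hence the sum of $|B(x,d)|$ over all iterations is at most $2^m$, and together with the $O(2^m)$ pointer work this yields the claimed $O(2^m)$ overall runtime.

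Second, to produce the decoding table I would allocate an array $T:\F_2^m\to\GV_{m,d}\cup\{\perp\}$ initialized to $\perp$; whenever the greedy step adds a codeword $x$ to $C$, I would enumerate the smaller ball $B(x,\lfloor(d-1)/2\rfloor)$ and write $T[y]\leftarrow x$ for each $y$ in it. The key structural observation is that any two selected codewords are at Hamming distance strictly greater than $d$, because each new pick is drawn from the complement of $B(x,d)$ of every earlier pick; hence by the triangle inequality the balls $B(x,\lfloor(d-1)/2\rfloor)$ for $x\in\GV_{m,d}$ are pairwise disjoint. This makes $T$ well-defined, shows that whenever some codeword lies within distance $\lfloor(d-1)/2\rfloor$ of $y$ it is unique and equals $T[y]$, and, because the smaller balls are disjoint subsets of $\F_2^m$, bounds the total population work by $2^m$. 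Each subsequent decoding query is then a single array lookup, i.e.\ $O(1)$.

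Third, the covering-radius bound follows from the termination condition alone: the greedy loop ends only when $S=\emptyset$, so every $y\in\F_2^m$ has been removed at some iteration. An element is removed either by being chosen as a codeword (distance $0$ from $\GV_{m,d}$) or by lying in $B(x,d)$ for some chosen codeword $x$ (distance at most $d$), so $R(\GV_{m,d})=\max_{y\in\F_2^m}\dist{y,\GV_{m,d}}\le d$. The only spot that requires any care is the triangle-inequality/disjointness argument used in the middle step, where one must make sure the closed-ball radius $d$ in the definition of $B(x,d)$ lines up correctly with the decoding radius $\lfloor(d-1)/2\rfloor$; once that off-by-one is checked, everything else is routine.
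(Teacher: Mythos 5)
The paper does not supply a proof of this lemma (it is stated as a ``simple fact'' in Section~2), so I will assess your argument on its own terms.

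Your treatment of the decoding table and of the covering radius is correct: the disjointness of the small balls $B(x,\lfloor(d-1)/2\rfloor)$ follows from pairwise codeword distance $\geq d+1$ by the triangle inequality, so the table is well-defined and its population costs at most $2^m$; and the covering-radius bound is an immediate consequence of the termination condition.

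However, the runtime argument has a genuine gap. You claim that because each $y\in\F_2^m$ is removed from $S$ at most once, the sum of $|B(x,d)|$ over the chosen codewords is at most $2^m$. That inference does not hold: enumeration and removal are not the same quantity. When you enumerate $B(x,d)$ for a newly chosen codeword $x$, your scan visits many elements that were already deleted by earlier codewords --- you discover they are gone and skip them, but the visit still costs time. Unlike the decoding balls of radius $\lfloor(d-1)/2\rfloor$, the balls $B(x,d)$ for distinct codewords $x$ are \emph{not} disjoint: codewords are only guaranteed to lie at distance $>d$, and two radius-$d$ balls with centers at distance $d+1$ overlap substantially. For example, with $m=3$ and $d=2$, the greedy picks $000$ and $111$, and $|B(000,2)|+|B(111,2)|=7+7=14>8=2^3$. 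So $\sum_{x\in C}|B(x,d)|$ is not bounded by $2^m$ and the amortization as stated is invalid.

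The argument can be repaired, but you need an actual bound on the multiplicity of overlap. Exchanging the order of summation, $\sum_{x\in C}|B(x,d)|=\sum_{y\in\F_2^m}|C\cap B(y,d)|$, and one can show $|C\cap B(y,d)|\leq d+1$ for every $y$: taking $y=0$ without loss of generality, any $k$ codewords inside $B(0,d)$ all have weight at most $d$ and pairwise distance at least $d+1$, so summing pairwise distances gives $\binom{k}{2}(d+1)\leq\sum_{\ell}n_{\ell}(k-n_{\ell})\leq dk^{2}/2$ (where $n_{\ell}$ counts codewords with a $1$ in coordinate $\ell$ and $\sum_{\ell}n_{\ell}\leq kd$), which forces $k\leq d+1$. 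This yields $\sum_{x\in C}|B(x,d)|\leq(d+1)2^{m}=O(m\cdot 2^{m})$, a $\poly{m}$ factor away from the stated bound. That factor is harmless everywhere this lemma is used in the paper, but as written your ``hence'' step is a non-sequitur and needs this counting argument to stand.
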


%%%%%%%%%%%%%%%%%%%%%%%%%%%%%%%%%%%%%%%%%%%%%%%%%%%%%%%%%%%%%%%%%%%%%%%%%%%%%%%%%%%%%
\subsubsection{Reed-Solomon codes} 
\begin{definition}[Reed-Solomon codes]
Let $\F_q$ be finite field, $k$ and $m$ be integers satisfying $k\leq m \leq q$.
The encoding function for \emph{Reed-Solomon code} from $\F_k$ to $\F_m$ is the following:
First pick $m$ distinct elements $\alpha_1, \ldots, \alpha_m \in \F_q$;
on input $(a_0, a_1, \ldots, a_{k-1})\in \F_q^k$, define a degree-$k-1$ polynomial $P:\F_q \to \F_q$ as
$P(x)=\sum_{i=0}^{k-1}a_{i}x^i$; 
finally output the evaluations of $P(x)$ at $\alpha_1, \ldots, \alpha_m$,
i.e. the codeword is $(P(\alpha_1), \ldots, P(\alpha_m))$.
We will denote such a code by $\RS_{q,m,k}$.
\end{definition}

\begin{theorem}
The Reed-Solomon code defined above is an $[m, k, m-k+1]_q$ linear code.
\end{theorem}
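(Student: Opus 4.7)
The plan is to verify the three claims (linearity, dimension $k$, and minimum distance $m-k+1$) separately, all of which follow from standard properties of univariate polynomials over a field.

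First, I would observe that linearity is essentially free from the definition. The encoding map $E:\F_q^k \to \F_q^m$ sending $(a_0,\ldots,a_{k-1})$ to $(P(\alpha_1),\ldots,P(\alpha_m))$ is a composition of the linear map from coefficient vectors to polynomials (clear) with the evaluation map at the fixed points $\alpha_1,\ldots,\alpha_m$ (also linear, since $(P+Q)(\alpha) = P(\alpha)+Q(\alpha)$ and $(cP)(\alpha)=cP(\alpha)$). Hence the image is a linear subspace of $\F_q^m$, so $\RS_{q,m,k}$ is a linear code with block length $m$.

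Next, to establish that the dimension is exactly $k$, I would show that $E$ is injective. Suppose $E(a) = E(b)$ for two coefficient vectors, and let $P,Q$ be their associated polynomials. Then $P - Q$ is a polynomial of degree at most $k-1$ that vanishes at all $m \geq k$ distinct points $\alpha_1,\ldots,\alpha_m$. By the standard fact that a nonzero polynomial of degree $d$ over a field has at most $d$ roots, this forces $P - Q \equiv 0$, hence $a = b$. Therefore $E$ is injective, the code has $q^k$ codewords, and its dimension as a linear code is $k$.

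For the minimum distance, I would use linearity to reduce to computing the minimum weight of a nonzero codeword. Let $c = E(a) \neq 0$ correspond to a nonzero polynomial $P$ of degree at most $k-1$. Again by the root-count fact, $P$ has at most $k-1$ zeros among $\alpha_1,\ldots,\alpha_m$, so the number of zero coordinates of $c$ is at most $k-1$, and hence $\wt{c} \geq m - (k-1) = m - k + 1$. This gives $d \geq m-k+1$. To see the bound is tight (and to exhibit that the code really has minimum distance $m-k+1$ rather than something larger), I would construct an explicit codeword achieving it: take $P(x) = \prod_{i=1}^{k-1}(x - \alpha_i)$, which has degree exactly $k-1$ and vanishes at precisely the first $k-1$ evaluation points, yielding a codeword of weight exactly $m-k+1$.

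There is no substantive obstacle here: the entire argument rests on the single elementary fact that a nonzero univariate polynomial of degree $d$ over a field has at most $d$ roots, applied once for injectivity and once for the weight bound. The only mild subtlety is the assumption $m \leq q$ in the definition, which is what makes it possible to pick $m$ \emph{distinct} evaluation points $\alpha_1,\ldots,\alpha_m$ in $\F_q$; this hypothesis is used implicitly in both the injectivity argument and the weight bound.
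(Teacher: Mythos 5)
Your proof is correct and is the standard textbook argument for the parameters of Reed--Solomon codes; the paper itself states this theorem without proof, treating it as a classical fact. One small remark: your explicit codeword of weight $m-k+1$ shows $d \leq m-k+1$, and combined with your lower bound this pins down $d$ exactly, which is slightly more than the Singleton-optimality claim strictly requires but is the cleanest way to see it.
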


\begin{theorem}[\cite{WB86}]
There exists an efficient unique decoding algorithm for Reed-Solomon codes which runs in time $\poly{m,\log{q}}$.
\end{theorem}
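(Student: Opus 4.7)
The plan is to implement the Welch--Berlekamp algorithm. Suppose $y=(y_1,\ldots,y_m)\in\F_q^m$ is a received word that differs from the codeword of some (unknown) polynomial $P(x)$ of degree at most $k-1$ in at most $e:=\lfloor(m-k)/2\rfloor$ positions, where the codeword is obtained by evaluating $P$ at the fixed distinct points $\alpha_1,\ldots,\alpha_m\in\F_q$. I will recover $P$ by solving a linear system of size $O(m)$ over $\F_q$ followed by a polynomial division, both of which cost $\poly{m,\log q}$.

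The key step is to introduce two unknown polynomials: a nonzero error-locator $E(x)$ of degree at most $e$ and an auxiliary polynomial $N(x)$ of degree at most $e+k-1$, coupled by the $m$ linear equations
\[
N(\alpha_i) \;=\; y_i\cdot E(\alpha_i), \qquad i=1,\ldots,m.
\]
This is a homogeneous linear system in the $2e+k+1\le m+1$ unknown coefficients of $E$ and $N$, solvable by Gaussian elimination over $\F_q$ in $\poly{m,\log q}$ time.

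To see a nontrivial solution exists, take $E^*(x):=\prod_{i:y_i\neq P(\alpha_i)}(x-\alpha_i)$ (of degree at most $e$) and $N^*(x):=P(x)E^*(x)$ (of degree at most $e+k-1$). For each $i$, either $y_i=P(\alpha_i)$, so both sides equal $P(\alpha_i)E^*(\alpha_i)$, or $E^*(\alpha_i)=0$, so both sides vanish. For correctness, given any nontrivial solution $(E,N)$, define $Q(x):=N(x)-P(x)E(x)$, which has degree at most $e+k-1$. Every index $i$ with $y_i=P(\alpha_i)$ satisfies $Q(\alpha_i)=y_iE(\alpha_i)-P(\alpha_i)E(\alpha_i)=0$, and there are at least $m-e\ge e+k$ such indices (since $2e\le m-k$). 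Thus $Q$ has strictly more roots than its degree permits, forcing $Q\equiv 0$. Hence $N=P\cdot E$ as polynomials, and $P=N/E$ is recovered by a single polynomial division.

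The main obstacle is the degree accounting in the correctness step: the bounds on $\deg E$, $\deg N$, and the unique decoding radius $e\le\lfloor(m-k)/2\rfloor$ must line up exactly so that the number of ``clean'' evaluation points strictly exceeds $\deg Q$. Once this is in place the remaining tasks are purely routine: Gaussian elimination on an $O(m)\times O(m)$ matrix over $\F_q$ takes $O(m^3)$ field operations, polynomial division of degree-$O(m)$ polynomials takes $O(m^2)$ field operations, and each arithmetic operation in $\F_q$ costs $\poly{\log q}$, yielding the claimed $\poly{m,\log q}$ total running time.
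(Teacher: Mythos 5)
The paper does not include a proof of this theorem; it is stated as a citation to Welch and Berlekamp~\cite{WB86}, and your proposal is precisely a reconstruction of the Welch--Berlekamp algorithm that the citation refers to. Your proof is correct: the degree bookkeeping lines up exactly (with $\deg E\le e$, $\deg N\le e+k-1$, and $m-e\ge e+k>\deg Q$ whenever $2e\le m-k$, i.e.\ $e\le\lfloor(m-k)/2\rfloor$, the unique decoding radius for the $[m,k,m-k+1]_q$ code), the explicit witness $(E^*,N^*)=\bigl(\prod_{i:y_i\neq P(\alpha_i)}(x-\alpha_i),\,P\cdot E^*\bigr)$ certifies that the null space is nontrivial even though $2e+k+1$ may equal $m$ rather than exceed it, and the argument that $Q\equiv 0$ also rules out solutions with $E=0$ (since then $N$ would have $m>\deg N$ roots and vanish too), so $P=N/E$ is always well-defined.
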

Reed-Solomon codes are optimal in the sense that they meet the Singleton bound, which states that
for any linear $[m,k,d]_q$-code, $k\leq m-d+1$.

%%%%%%%%%%%%%%%%%%%%%%%%%%%%%%%%%%%%%%%%%%%%%%%%%%%%%%%%%%%%%%%%%%%%%%%%%%%%%%%%%%%%%
\subsubsection{Concatenated codes}
The most commonly used way to transform a nice code which has constant rate and constant relative distance over a large alphabet to a similarly nice code over binary is \emph{concatenation}, which was first introduced by Forney~\cite{For66}.

\begin{definition}[Concatenated codes]
Let $C_1$ be an $(m_1, K_1, d_1)_{Q}$-code and
let $C_2$ be an $(m_2, K_2, d_2)_{q}$-code with $K_2\geq Q$.
Then the code obtained by concatenating $C_1$ with $C_2$, denoted by $C=C_1 \diamond C_2$,
is an $(m,K,d)_q$-code defined as follows.
Let $\phi$ by any mapping from $\F_Q$ onto $C_2$. 
Then the codewords of $C_1 \diamond C_2$ are obtained by replacing
each element in $\F_Q$ of any codeword $w=((w)_1, \ldots, (w)_{m_1}) \in C_1$ with the corresponding codeword in $C_2$
defined by $\phi$; namely 
$C =\{\phi((w)_1) \circ \cdots \circ \phi((w)_{m_1}) : w\in C_1\}$,
where each $\phi((w)_j)$ consists of $m_2$ elements in $\F_q$ and $\circ$ denotes string concatenation.
Note that each codeword in $C$ is an element in $\F_q^{m_1 m_2}$ and there are $K_1$ such codewords, 
therefore $m=m_1 m_2$ and $K=K_1$.
Usually $C_1$ is called the \emph{outer code} and $C_2$ is called the \emph{inner code}.
\end{definition}

It is well-known that the minimum distance of $C$ is $d_1 d_2$,
and the rate of $C$ is $\kappa(C)=\kappa(C_1)\kappa(C_2)$. 
Another useful fact is that $C$ can be efficiently decoded as long as both $C_1$ and $C_2$
can be efficiently decoded.

% The following is a well-known fact about the rate and distance of a concatenated code.
% \begin{fact}\label{fact:CC}
% If $C_1$ is an $(n_1, K_1, d_1)_Q$-code, $C_2$ is an $(n_2, K_2, d_2)_{q}$-code where $K_2 \geq Q$, 
% and $C$ is the concatenated code $C=C_1 \diamond C_2$,
% then $C$ is an $(n_1 n_2, K_1, d_1 d_2)_{q}$-code.
% \end{fact}

\begin{fact}\label{fact:CC_decoding}
Suppose $C_1$ is an $(m_1, K_1, d_1)_Q$-code with a decoding algorithm $A_1$ running in
$p_1(m_1, \log{Q})$ time,
$C_2$ is an $(m_2, K_2, d_2)_q$-code, where $K_2\geq Q$, and a decoding algorithm $A_2$ running in
$p_2(m_2, \log{q})$ time.
If $C$ is the concatenated code $C=C_1 \diamond C_2$, and then there is a decoding algorithm $A$ for $C$ which
run in time $p_1(m_1, \log{Q})+ m_1 p_2(m_2, \log{q})$ by first decoding $m_1$ received words of $C_2$ 
each consisting of $m_2$ elements in $\F_q$,
and then decode the $m_1$ concatenated elements in $\F_Q$ as a received word of $C_1$. 
\end{fact}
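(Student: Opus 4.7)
The plan is to implement exactly the two-stage decoder suggested by the statement and then tally its running time. Given a received word $r \in \F_q^{m_1 m_2}$, I would first partition $r$ into $m_1$ consecutive blocks $r^{(1)}, \ldots, r^{(m_1)}$, each lying in $\F_q^{m_2}$, corresponding to the inner-code positions of $C = C_1 \diamond C_2$. I would then run the inner decoder $A_2$ on each block $r^{(j)}$ to obtain a codeword $c_j \in C_2$, and pull $c_j$ back through the bijection $\phi$ between $\F_Q$ and its image $\phi(\F_Q) \subseteq C_2$ to obtain a symbol $\alpha_j \in \F_Q$. This produces an intermediate word $\alpha = (\alpha_1, \ldots, \alpha_{m_1}) \in \F_Q^{m_1}$, to which I would apply the outer decoder $A_1$, obtaining a codeword $w \in C_1$. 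The algorithm $A$ then returns the natural re-encoding $\phi((w)_1) \circ \cdots \circ \phi((w)_{m_1}) \in C$.

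The runtime analysis is then a straightforward tally. The inner phase consists of $m_1$ invocations of $A_2$, each costing at most $p_2(m_2, \log q)$, for a total of $m_1 \cdot p_2(m_2, \log q)$. The outer phase is a single invocation of $A_1$ on a length-$m_1$ vector over $\F_Q$, costing $p_1(m_1, \log Q)$. The remaining bookkeeping — partitioning $r$ into blocks, translating through $\phi^{-1}$ on each block, and performing the final codeword concatenation — is absorbed into these two dominant costs (up to lower-order factors in $m_1$ and $m_2$). Summing yields the claimed bound of $p_1(m_1, \log Q) + m_1 \cdot p_2(m_2, \log q)$.

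The only subtlety, which I do not expect to present a real obstacle, is handling the case when some inner decoder call fails (i.e.\ $r^{(j)}$ is not within the unique-decoding radius of any codeword of $C_2$). I would simply map any such failure to an arbitrary default symbol in $\F_Q$; each such substitution contributes at most one erroneous position to $\alpha$, which is exactly the number of ``incorrect'' positions $A_1$ already has to tolerate to achieve the standard concatenated-code guarantee. Hence correctness of the two-stage decoder follows directly from the unique-decoding guarantees of $A_1$ and $A_2$, and the Fact reduces to the additive runtime accounting above.
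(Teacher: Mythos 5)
Your two-stage decoder and runtime tally are exactly the algorithm the paper has in mind; in fact, the paper states this as a Fact with the decoding procedure embedded in the statement itself and gives no separate proof, so your write-up simply expands the implicit argument. The proposal is correct and matches the intended approach, including the sensible handling of inner-decoding failures (or decoded codewords outside $\phi(\F_Q)$) by mapping to a default symbol, which the paper leaves unstated.
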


% \begin{fact}\label{fact:CC_covering_radius}
% If $C$ is a concatenated code $C=C_1 \diamond C_2$, then the covering radius of $C$ is at most $R(C_1)R(C_2)$.
% \end{fact}
% \begin{proof}
% I can not find a proof for it for now!
% \end{proof}

%%%%%%%%%%%%%%%%%%%%%%%%%%%%%%%%%%%%%%%%%%%%%%%%%%%%%%%%%%%%%%%%%%%%%%%%%%%%%%%%%%%%%
\subsubsection{Codes used in our algorithms}\label{sec:our_codes}
Some of the codes to be employed in our algorithm are a family of codes constructed by 
concatenating Reed-Solomon codes with certain binary non-linear Gilbert's greedy codes 
meeting the Gilbert-Varshamov bound.
It is well-known that concatenated codes such constructed can be made to meet the so-called 
\emph{Zyablov bound}\footnote{In fact, a stronger bound called \emph{Blokh-Zyablov bound} 
can be achieved by applying multilevel concatenations 
(see e.g.~\cite{Dum98} for a detailed discussion on multilevel concatenations of codes); however, as 
the improvement is minor, we only use single level concatenation in our code constructions 
to make the algorithms simpler.}
% \[
% \delta(C) \geq \max_{\kappa(C) < \kappa(C_2) < 1}\left(1-\frac{\kappa(C)}{\kappa(C_2)}\right)H_{2}^{-1}(1-\kappa(C_2)).
% \]
\begin{align}\label{eqn:kappa_Z}
\kappa(C) \geq \max_{0<\kappa(C_2) < 1-H_{2}(\delta(C))}
\kappa(C_2)\left(1-\frac{\delta(C)}{H_{2}^{-1}(1-\kappa(C_2))} \right)
\end{align}

Suppose we want a binary $(m, K, d)$-code for our algorithms, where $m$ and $d$ are fixed and
our goal is to maximize $K$, conditioned on that the code is efficiently decodable.
We pick a Reed-Solomon code $C_1=\RS_{q,m_1,k_1}$ and a Gilbert's greedy code $C_2 =\GV_{m_2,d_2}$
with the following constraints: $m_1 m_2 \leq m$ ($m_1 m_2$ should be as close to $m$ as possible), 
$d_1 d_2 \geq d$, $K_2=2^{m_2\kappa(C_2)} \geq q > m_1$,  and
$2^{m_2} \leq \poly{m_1}$. It is easy to check that there are large ranges of values for $m_1$ and $m_2$,
and optimizing the choice of $d_2$ (and therefore $\delta(C_2)$) makes our concatenated code 
$C=C_1 \diamond C_2$ both meets the Zyablov bound in Eqn.~\eqref{eqn:kappa_Z}
and can be decoded in $\poly{m}$ time.

We will denote the maximum rate as a function of the relative distance $\delta$ given by the Zyablov bound   
by $\kappa_{\mathsmaller{Z}}(\delta)$, and similarly denote the maximum rate given by the Gilbert-Varshamov bound
by $\kappa_{\mathsmaller{GV}}(\delta)$ (i.e. $\kappa_{\mathsmaller{GV}}(\delta)=1-H_{2}(\delta)$).
Note that $\kappa_{\mathsmaller{Z}}(\delta) \leq \kappa_{\mathsmaller{GV}}(\delta)$ for all $0 \leq \delta \leq 1/2$,
and the reason we use codes achieving only $\kappa_{\mathsmaller{Z}}(\delta)$ is because such codes can be 
generated and decoded in $\poly{m}$ time. 

%%%%%%%%%%%%%%%%%%%%%%%%%%%%%%%%%%%%%%%%%%%%%%%%%%%%%%%%%%%%%%%%%%%%%%%%%%%%%%%%%%%%%
\subsection{The Closest Pair Problem}
Given $n$ vectors $x_0, x_1, \ldots, x_{n-1}$ in $\{0,1\}^{m}$,
the Closest Pair Problem is to find two vectors whose pairwise Hamming distance is minimum.
For ease of exposition and without loss of generality, we will assume throughout the paper that there is a unique pair, namely $x_0$ and $x_1$, that achieves the minimum pairwise distance $\dmin$.
We will use $d_2$ to denote the second minimum pairwise distance, where $d_2\geq \dmin+1$.
In the most general case, we do not make any assumption about $m$, $\dmin$ or $d_2$.

%%%%%%%%%%%%%%%%%%%%%%%%%%%%%%%%%%%%%%%%%%%%%%%%%%%%%%%%%%%%%%%%%%%%%%%%%%%%%%%%%%%%%
%%%%%%%%%%%%%%%%%%%%%%%%%%%%%%%%%%%%%%%%%%%%%%%%%%%%%%%%%%%%%%%%%%%%%%%%%%%%%%%%%%%%%
\section{Main Algorithm for the Closest Pair Problem}\label{sec:main}
We now present our Main Algorithm for the Closest Pair Problem. For ease of exposition,
we make a somewhat unnatural assumption that the value of $\dmin$ is given. 
However, as we show in Section~\ref{sec:search}, the algorithm can be modified
to get rid of this assumption, with only a slight slowdown in running time. 

\begin{theorem}[Non-gapped version]\label{thm:general_rand}
Let $x_0, x_1, \ldots, x_{n-1}$ in $\{0,1\}^{m}$ be $n$ binary vectors such that 
$x_0$ and $x_1$ is the unique pair achieving the minimum pairwise distance $\dmin$ (and the second smallest
distance can be as small as $\dmin + 1$). 
Suppose we are given the value of $\dmin$ and let $\delta := \dmin/m$.
Then there is a randomized algorithm
running in $O(n\log^{2}n\cdot 2^{(1-\kappa_{\mathsmaller{Z}}(\delta)-\delta)m} \cdot \poly{m})$ 
which finds the closest pair $x_0$ and $x_1$ with probability at least $1-1/n^2$.  
\end{theorem}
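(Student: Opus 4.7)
The plan is to instantiate Algorithm~\ref{alg:main_intro} with a binary concatenated code $C$ obtained by taking a Reed--Solomon outer code over a small field and a Gilbert's greedy inner code, picked as in Section~\ref{sec:our_codes} so that $C$ has block length $m$, minimum distance at least $\dmin+1$, rate at least $\kappa_{\mathsmaller{Z}}(\delta) - o(1)$, and admits a unique-decoding algorithm running in $\poly{m}$ time. Write $r$ for the resulting decoding radius, which satisfies $r \geq \lceil \dmin/2 \rceil$ by the choice of $d$.

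\textbf{Good shifts.} Call $y$ \emph{good} if there exists $c \in C$ with $x_0+y, x_1+y \in B(c,r)$; equivalently, $y \in B(c-x_0, r) \cap B(c-x_1, r)$ for some $c$. The two ball centers lie at Hamming distance $\dmin$, so a direct coordinate-by-coordinate count (splitting the $m$ coordinates according to where $c-x_0$ and $c-x_1$ agree versus disagree) gives
\[
|B(c-x_0, r) \cap B(c-x_1, r)| \;\geq\; \binom{\dmin}{\lfloor \dmin/2 \rfloor} \;\geq\; \frac{2^{\dmin}}{\poly{m}}
\]
by Lemma~\ref{lem:binomial}. Since the unique-decoding balls around distinct codewords of $C$ are pairwise disjoint, the corresponding intersections are disjoint as well, and summing over the $|C| \geq 2^{\kappa_{\mathsmaller{Z}}(\delta)m}/\poly{m}$ codewords yields
\[
\Pr_{y}[y \text{ is good}] \;\geq\; \frac{|C| \cdot \binom{\dmin}{\lfloor \dmin/2 \rfloor}}{2^m} \;\geq\; \frac{1}{2^{(1 - \kappa_{\mathsmaller{Z}}(\delta) - \delta)m} \cdot \poly{m}}.
\]

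\textbf{Correctness per shift and amplification.} Conditional on $y$ being good, $x_0$ and $x_1$ acquire the same decoded value $\tilde x_0 = \tilde x_1 = c$, and hence land in the same block of the sorted list. I refine Step~6 of Algorithm~\ref{alg:main_intro} so that within every maximal block of identical decoded values \emph{all} pairs of original vectors are compared; a second-moment bound on $\sum_b |b|^2$, using that any fixed pair of vectors collides in the same decoding ball with probability $O(|B(c,r)|/2^m)$, controls this extra cost within the claimed budget. Then the pair $(x_0, x_1)$ is necessarily examined, and by the uniqueness of $\dmin$ the algorithm outputs it. Running $T = O(\log n) \cdot 2^{(1-\kappa_{\mathsmaller{Z}}(\delta)-\delta)m} \cdot \poly{m}$ independent trials drives the failure probability below $1/n^2$; each trial decodes $n$ vectors in $\poly{m}$ time and sorts in $O(n \log n \cdot m)$ time, giving total running time $O(n \log^{2} n \cdot 2^{(1-\kappa_{\mathsmaller{Z}}(\delta)-\delta)m} \cdot \poly{m})$.

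\textbf{Main obstacle.} The trickiest ingredient is the block-size bookkeeping: naively checking only \emph{adjacent} pairs in the sorted list can miss $(x_0, x_1)$ when a third vector also decodes to the same codeword and sorts between them. The cleanest fixes are either (i) bucket the original vectors by decoded codeword and brute-force each bucket, charging the cost through $\mathbb{E}_{y} \sum_b \binom{|b|}{2}$, or (ii) argue via a union bound that, given a good shift, with probability close to one no third vector collides with $c$, so the per-trial success probability absorbs this event without changing the exponent.
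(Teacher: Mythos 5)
Your proof takes the same basic route as the paper: pick a concatenated RS/Gilbert code with minimum distance just above $\dmin$, count the good shifts $y$ that land both $x_0+y$ and $x_1+y$ inside a common decoding ball via $|C|\cdot\binom{\dmin}{\lfloor\dmin/2\rfloor}$, and repeat $O(\log n)$ times over the inverse of that probability. The count, the disjointness of the per-codeword intersections, and the amplification are all right.

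However, your ``main obstacle'' is not actually an obstacle, and the two fixes you propose are unnecessary. With $r=\lfloor\dmin/2\rfloor$ (take $d=\dmin+1$ and $\dmin$ even for concreteness; the odd case needs $d=\dmin+2$ and a one-sided version of the middle set), a good $y$ gives $\dist{x_0+y,c}=\dist{x_1+y,c}=\dmin/2$. If any \emph{third} vector $x_i$, $i\geq 2$, also decoded to $c$, then $\dist{x_i+y,c}\leq r$ and the triangle inequality would force
\[
\dist{x_0,x_i}=\dist{x_0+y,x_i+y}\leq\dist{x_0+y,c}+\dist{c,x_i+y}\leq\dmin,
\]
contradicting the uniqueness of the closest pair. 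So for a good shift, \emph{no} other vector is decoded to $c$ --- this holds deterministically, not merely ``with probability close to one,'' which is why checking only adjacent pairs in the sorted list suffices and no block-wise brute force or second-moment bookkeeping is needed. Your fix (ii) gestures at the right fact but misattributes it to a probabilistic event; fix (i) works but pays an unnecessary cost. The paper simply observes the triangle-inequality exclusion above (for middle points $z$ at distance exactly $\dmin/2$ from both $x_0$ and $x_1$), which makes the correctness of the sort-and-check step immediate and keeps the analysis elementary.
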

\begin{proof}
Our Main Algorithm for the Closest Pair problem is described in Algorithm~\ref{alg:main}, 
and the decoding subroutine
$\textrm{Dec}(C, r, x)$ is illustrated in Algorithm~\ref{alg:dec}.
Note that we choose the minimum distance of $C$ to be $\dmin + 1$, hence the decoding radius of $C$
is $\dmin/2$ (without loss of generality, assume that $\dmin$ is even).

For the correctness of the algorithm, first note that our algorithm will 
output the correct minimum distance if and only if
$x_0$ is ever compared against $x_1$ for computing pairwise distance, and this happens if and only if $x_0$ and $x_1$
are adjacent in the sorted array after decoding. A sufficient condition for the latter is that
the decoded vectors of $x_0$ and $x_1$ are identical and they are different from any other decoded vectors. 

How many shift vectors $y\in \{0,1\}^m$ in Algorithm~\ref{alg:main} satisfy this condition?
We will call such vectors \emph{good} vectors.
Denote the set of vectors lying at the ``middle'' between $x_0$ and $x_1$ by 
\[
\textsc{mid}=\{z\in \{0,1\}^m: \dist{x_0, z} = \dist{z, x_1} = \dmin/2\}.
\]
Note that any vector $y$ that shifts a vector $z\in \textsc{mid}$ to a codeword $c\in C$ would be a \emph{good}
vector. To see this, first note that after such a shift, $y+z$ is a codeword in $C$,
and both $y+x_0$ and $y+x_1$ lie within the decoding radius of $y+z$, and therefore will be decoded 
to $y+z$. Moreover, the shifted vector of any other input vector $y+x_i$, $2 \leq i \leq n-1$, 
lies outside the decoding radius of $y+z$.
This is because if it does, then by triangle inequality and 
the fact that the decoding radius of $C$ is $\dmin/2$, 
\begin{align*}
\dist{x_0, x_i} 
&=\dist{y+x_0, y+x_i} \\
&\leq \dist{y+x_0, y+z} +\dist{y+z, y+x_i} \\
&\leq \dmin/2+\dmin/2 = \dmin,
\end{align*}
contradicting our assumption that $x_0$ and $x_1$ is the unique pair achieving the minimum distance.

How many such \emph{good} vectors?
There are in total $\binom{\dmin}{\dmin/2}$ vectors exist in $\textsc{mid}$, and all their pairwise distances
are at most $\dmin$. Let $c_1, c_2$ be two distinct codewords in $C$. By our choice of the minimum distance of $C$,
$\dist{c_1, c_2} > \dmin$. Consider any two distinct vectors $z_1$ and $z_2$ in $\textsc{mid}$.
Clearly applying these two shift vectors to the same codeword gives two distinct vectors,
namely $c_1+z_1$ and $c_1+z_2$. 
Moreover, applying two distinct vectors in $\textsc{mid}$ to two distinct codewords also results in two 
distinct shift vectors, because
\[
\dist{c_1+z_1, c_2+z_2}=\wt{c_1+c_2+z_1+z_2} > 0,
\]
since $\wt{c_1+c_2}\geq d >\dmin$ but $\wt{z_1+z_2} = \dist{z_1,z_2}\leq \dmin$.

Recall that $C$ is a $(m, K, d)$-code and hence there are $K$ codewords in $C$. 
It follows that there are in total $K\cdot \binom{\dmin}{\dmin/2}$ \emph{good} vectors of this kind.
Therefore 
\[
\Pr(\text{a random $y$ succeeds in finding the closest pair}) 
\geq \frac{K\cdot \binom{\dmin}{\dmin/2}}{2^m}, 
\]
and hence repeatedly selecting 
\begin{align*}
 2\ln{n}\frac{2^m}{K\cdot \binom{\dmin}{\dmin/2}} 
&= O\left(\log{n}\frac{\sqrt{\delta m}2^m}{2^{\kappa_{\mathsmaller{Z}}(\delta) m}2^{\delta m}}\right) \\
&= O(2^{(1-\kappa_{\mathsmaller{Z}}(\delta)-\delta)m}m^{1/2}\log{n})
\end{align*} 
independent $y$'s
will succeed with probability at least $1-1/n^2$, where in the last step we use the bound $\binom{n}{n/2}=O(\frac{2^n}{\sqrt{n}})$, a special case of Lemma~\ref{lem:binomial}.

Finally, note that each choice of shift vector $y$ requires $n\cdot \poly{m}$ time decoding
as well as $O(n\log{n}\cdot m)$ sorting and comparing adjacent vectors, so the total running time of the algorithm
is $O(n\log^{2}n\cdot 2^{(1-\kappa_{\mathsmaller{Z}}(\delta)-\delta)m} \cdot \poly{m})$.
\end{proof}

\RestyleAlgo{boxruled}
\LinesNumbered
\begin{algorithm}[ht] 
\caption{Main Algorithm for the Closest Pair Problem}\label{alg:main}
\SetKwData{Left}{left}\SetKwData{This}{this}\SetKwData{Up}{up}
\SetKwFunction{Decode}{Decode}
\SetKwInOut{Input}{input}\SetKwInOut{Output}{output}
\Input{A set of $n$ vectors $x_0,  \ldots, x_{n-1}$ in $\{0,1\}^{m}$ and $\dmin$ }
\Output{Two vectors and their distance}
\BlankLine
	generate a binary $(m, K, d)$-code $C$ with $d = \dmin +1$ \\
	\For{$i\leftarrow 1$ \KwTo $ O(2^{(1-\kappa_{\mathsmaller{Z}}(\delta)-\delta)m}m^{1/2}\log{n})$}{
		pick a random $y\in \{0,1\}^{m}$ \\
		\For{$j\leftarrow 0$ \KwTo $n-1$}{\label{forins}
			$\tilde{x}_{j} \leftarrow \mathrm{Dec}(C, \lfloor \dmin/2 \rfloor, y+x_j) $
		}
		sort $\tilde{x}_{0}, \ldots, \tilde{x}_{n-1}$ \\ 
		(suppose the sorted sequence is $\tilde{x}_{s_0}, \ldots, \tilde{x}_{s_{n-1}}$, where $\{s_0, \ldots, s_{n-1} \}$ is
			a permutation of $\{0,1, \ldots n-1\}$) \\
		\For{$j\leftarrow 1$ \KwTo $n-1$}{\label{forins2}
			compute $\dist{x_{s_{j-1}},x_{s_j}}$
			}
	}
	output the pair of vectors with minimum distance ever found and their distance \
\end{algorithm}

\RestyleAlgo{boxruled}
\LinesNumbered
\begin{algorithm}[ht] 
\caption{$\mathrm{Dec}(C, r, x)$} \label{alg:dec}
	\SetKwData{Left}{left}\SetKwData{This}{this}\SetKwData{Up}{up}
	\SetKwFunction{Decode}{Decode}
	\SetKwInOut{Input}{input}\SetKwInOut{Output}{output}
	\Input{A binary $(m, K, d)$-code $C$, a decoding radius $r < d/2$, and a vector $x \in \{0,1\}^{m}$}
	\Output{A vector $\tilde{x} \in \{0,1\}^{m}$}
	\BlankLine
	run the (efficient) decoding algorithm for $C$ on input vector $x$, and let the output vector be $\tilde{x}$ \\
	\eIf{$\dist{x, \tilde{x}} \leq r$} 
		{output $\tilde{x}$}
	{output $x$}

\end{algorithm}

If we assume further that a decoding algorithm for some binary $(m, K, d)$-code $C$ which meets
the Gilbert-Varshamov bound is given as a black box, then the running time in
Theorem~\ref{thm:general_rand} can be improved to 
$O(n\log^{2}n\cdot 2^{(H_{2}(\delta)-\delta)m} \cdot \poly{m})$.
Note that this is not a totally unrealistic assumption, as for most interesting settings,
$m=c\log{n}$ for some small constant $c$.\footnote{As in the settings of random vectors, e.g. the Light Bulb Problem,
$m=c\log{n}$ is both necessary and sufficient to distinguish $n$ stochastic bit sequences.} Therefore, greedily searching for a binary code of 
block length $m$ that meets the Gilbert-Varshamov bound is tantamount to 
running an $O(n^c)$ time preprocessing, which can be reused for any problem instance with the same
vector length and minimum closest pair distance. 

If there is a gap between $d_2$ and $\dmin$ (this roughly corresponds to the \emph{approximate
closest pair} problem in~\cite{Val15}), then we can improve the running time of the Main Algorithm
in Theorem~\ref{thm:general_rand} by exploiting an error correcting code with larger decoding radius. 
\begin{theorem}[Gapped version]\label{thm:general_rand2}
Let $x_0, x_1, \ldots, x_{n-1}$ in $\{0,1\}^{m}$ be $n$ binary vectors such that 
$x_0$ and $x_1$ is the unique pair achieving the minimum pairwise distance $\dmin$.
Suppose we are given the values of $\dmin$ as well as the second minimum distance $d_2$.
Let $\delta := \dmin/m$ and $\delta' := d_2/m$.
Then there is a randomized algorithm
running in $O(n\log^{2}n\cdot 2^{(1-\kappa_{\mathsmaller{Z}}(\delta')
-\delta-(1-\delta)H_{2}(\frac{\delta' - \delta}{2(1-\delta)}))m} \cdot \poly{m})$ 
which finds the closest pair $x_0$ and $x_1$ with probability at least
$1-1/n^2$.  
Moreover, the running time can be further improved to 
$O(n\log^{2}n\cdot 2^{(H_2(\delta')
-\delta-(1-\delta)H_{2}(\frac{\delta'-\delta}{2(1-\delta)}))m} \cdot \poly{m})$, 
if we are given black box access to the decoding algorithm of an $(m, K, d)$-code which meets
the Gilbert-Varshamov bound.
\end{theorem}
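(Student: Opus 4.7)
The plan is to adapt the proof of Theorem~\ref{thm:general_rand} by enlarging the decoding radius. Specifically, I would run Algorithm~\ref{alg:main} essentially verbatim, but replace the code $C$ by a Zyablov-bound-meeting binary code with minimum distance $d = d_2$ and decoding radius $r = \lfloor (d_2-1)/2 \rfloor$; note that $2r < d_2$. A random shift $y$ is \emph{good} if there exists $c \in C$ with $c-y \in B(x_0,r) \cap B(x_1,r)$, in which case both $y+x_0$ and $y+x_1$ get decoded to $c$. No other $y+x_i$ can be decoded to $c$, for if it were, the triangle inequality would give $\dist{x_0, x_i} \leq 2r < d_2$, contradicting the definition of $d_2$. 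Hence on every good shift the algorithm correctly outputs the closest pair.

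Next I would lower-bound the number of good shifts, which is precisely $K \cdot |B(x_0,r) \cap B(x_1,r)|$: the map $(c,z) \mapsto c-z$ is injective on $C \times (B(x_0,r) \cap B(x_1,r))$ because if $c_1 - z_1 = c_2 - z_2$ with $c_1 \neq c_2$, then $d_2 \leq \dist{c_1,c_2} = \dist{z_1,z_2} \leq 2r < d_2$, a contradiction, where $\dist{z_1,z_2} \leq 2r$ follows from both $z_i$ lying in $B(x_0,r)$. To lower-bound the intersection size, I would parameterize $z$ by two counts: $a$, the number of coordinates in $[m]\setminus D$ where $z$ disagrees with $x_0$ (equivalently $x_1$), and $b$, the number of coordinates in $D := \{i : (x_0)_i \neq (x_1)_i\}$ where $z$ agrees with $x_0$, so that $\dist{x_0,z} = a + (\dmin - b)$ and $\dist{x_1,z} = a+b$. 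Restricting to the ``mid'' slice $b = \dmin/2$ and $a \leq r - \dmin/2 = (\delta'-\delta)m/2$ gives
\[
|B(x_0,r) \cap B(x_1,r)| \;\geq\; \binom{\dmin}{\dmin/2} \cdot \sum_{a=0}^{(\delta'-\delta)m/2} \binom{(1-\delta)m}{a},
\]
which by Lemma~\ref{lem:binomial} is $\Omega\bigl( 2^{(\delta + (1-\delta) H_2(\frac{\delta'-\delta}{2(1-\delta)}))m} / \poly{m}\bigr)$.

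Combining this with $K \geq 2^{\kappa_{\mathsmaller{Z}}(\delta') m}/\poly{m}$ from the Zyablov-bound code yields that a random shift is good with probability at least $2^{-(1 - \kappa_{\mathsmaller{Z}}(\delta') - \delta - (1-\delta)H_2(\frac{\delta'-\delta}{2(1-\delta)}))m}/\poly{m}$. Running $O(\log n)$ times this many independent trials drives the failure probability below $1/n^2$, and each trial costs $n\log n \cdot \poly{m}$ for decoding, sorting, and the $n-1$ adjacent-pair comparisons, giving the stated running time. The GV-code variant is obtained identically, replacing $\kappa_{\mathsmaller{Z}}(\delta')$ by $\kappa_{\mathsmaller{GV}}(\delta') = 1 - H_2(\delta')$ in the exponent. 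The main obstacle I anticipate is the careful binomial/entropy estimation of the ``mid''-slice count together with the off-by-one bookkeeping in the choice of $r$ and the parities of $\dmin$ and $d_2$; these appear routine but must be reconciled with the precise constants in the stated exponent.
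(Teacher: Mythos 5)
Your proposal follows essentially the same route as the paper: enlarge the decoding radius to roughly $d_2/2$, define the ``middle'' region as $B(x_0,r)\cap B(x_1,r)$, prove injectivity of $(c,z)\mapsto c+z$, and estimate the intersection volume by restricting to the slice $b=\dmin/2$ and using Lemma~\ref{lem:binomial}. The only cosmetic difference is that you take a code with minimum distance $d_2$ (so $2r<d_2$) and lower-bound the intersection by summing over $a$, whereas the paper takes minimum distance $d_2+1$ (so decoding radius $\lfloor d_2/2\rfloor$) and uses a single binomial term $\binom{\dmin}{\dmin/2}\binom{m-\dmin}{r-\dmin/2}$; both give the same exponent up to $\poly{m}$ factors, and the off-by-one bookkeeping you flag is indeed benign.
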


\begin{proof}
The proof follows a similar structure as the proof of Theorem~\ref{thm:general_rand}.
The main difference is now we pick a binary error correcting code of minimum distance $d_2+1$,
thereby decoding radius $r=d_2/2=\frac{1}{2}\delta' m$ (once again, for simplicity, we assume $d_2$ is even).

Accordingly, the ``middle point'' set is now defined as  
\[
\textsc{mid}_G =\{z\in \{0,1\}^m: \text{$\dist{x_0, z} \leq r$ and $\dist{x_1, z} \leq r$}\}.
\]
We now give a lower bound on the size of $\textsc{mid}_G$.

Without loss of generality, we assume $x_0=0^m$ and let $T=\{i \in [m]: (x_1)_i = 1\}$.
Clearly $|T|=\dmin$. Let $i = |\{k\in T: (z)_k = 0\}|$
and $j = |\{k\in [m]\setminus T: (z)_k = 1\}|$. Then
$\dist{x_0, z} \leq r$ is equivalent to $\dmin-i+j\leq r$,
and $\dist{x_1, z} \leq r$ is equivalent to $i+j\leq r$.
Therefore 
\begin{align*}
|\textsc{mid}_G|
&=\sum_{i+j\leq r}\sum_{\dmin-i+j\leq r} \binom{\dmin}{i} \binom{m-\dmin}{j} \\
&\geq \binom{\dmin}{\dmin/2} \binom{m-\dmin}{r-\dmin/2} \\
&=\Theta\left( \frac{2^{\delta m}}{\sqrt{\delta m}} 
\frac{2^{(1-\delta)H_{2}(\frac{\delta' - \delta}{2(1-\delta)})m}}{\sqrt{(1-\delta)m}}\right),
\end{align*}
where the last step follows from Lemma~\ref{lem:binomial}. The rest of the proof is identical to that of Theorem~\ref{thm:general_rand}, and therefore is omitted.
\end{proof}

%%%%%%%%%%%%%%%%%%%%%%%%%%%%%%%%%%%%%%%%%%%%%%%%%%%%%%%%%%%%%%%%%%%%%%%%%%%%%%%%%%%%%
\subsection{A deterministic variant of the Main Algorithm}
One can turn our randomized Main Algorithm into a deterministic one by exhaustively searching for
all possible shift vectors $y \in \F_2^m$. A simple observation is that it suffices to check 
for all vectors in the Hamming ball of radius equals to the covering radius of the code $C$.

\begin{theorem}\label{thm:general_det}
Let $x_0, x_1, \ldots, x_{n-1}$ in $\{0,1\}^{m}$ be $n$ binary vectors such that 
$x_0$ and $x_1$ is the unique pair achieving the minimum pairwise distance $\dmin$ (and the second smallest
distance can be as small as $\dmin + 1$). 
Suppose we are given the value of $\dmin$ and let $\delta := \dmin/m$.
Then there is a deterministic algorithm that finds the closest pair $x_0$ and $x_1$ with
running time $O(n\log{n}\cdot 2^{H_{2}(1-\kappa_{\mathsmaller{Z}}(\delta))m} \cdot \poly{m})$.
Moreover, if we are given as black box the decoding algorithm of a random Varshamov linear code
with block length $m$ and minimum distance $\dmin+1$, then the running time is
$O(n\log{n}\cdot 2^{H_{2}(H_{2}(\delta))m} \cdot \poly{m})$.
\end{theorem}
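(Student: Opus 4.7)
The plan is to derandomize the Main Algorithm (Algorithm~\ref{alg:main}) by replacing the random choice of shift vectors $y$ with an exhaustive enumeration over the Hamming ball of radius equal to the covering radius $R(C)$ of the underlying code. Correctness will follow from a coset-leader argument, and the running time from the classical redundancy bound $R(C) \leq (1 - \kappa(C))m$ for linear codes.

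First I would arrange that $C$ is $\F_2$-linear. For the Zyablov variant, use Reed--Solomon as the outer code and, in place of Gilbert's greedy code, a Varshamov linear inner code of block length $m_2 = O(\log m)$. Since $2^{m_2}$ is polynomial in $m$, such a linear inner code meeting the GV bound can be found by brute-force search in $\poly{m}$ time, the concatenation is still $\F_2$-linear and efficiently decodable, and it still meets the Zyablov bound. The Varshamov-code variant of the theorem gives linearity by hypothesis.

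Next I would argue correctness. Call $y$ a \emph{good} shift if $y + x_0$ and $y + x_1$ decode to the same codeword. When $C$ is linear, the set of good shifts is closed under adding codewords: if $y$ is good and $c \in C$, then $y + c + x_i = (y + x_i) + c$ decodes to $\mathrm{Dec}(y + x_i) + c$ for $i = 0, 1$, so $y + c$ is good as well. Hence the good shifts form a union of cosets of $C$, and this union is nonempty by the counting argument in the proof of Theorem~\ref{thm:general_rand}. Each good coset contains a coset leader of Hamming weight at most $R(C)$, so exhaustively trying all $y \in B(0, R(C))$ is guaranteed to encounter a good shift, after which the sort-and-compare stage succeeds exactly as in the randomized analysis.

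Finally I would bound the running time using the redundancy bound $R(C) \leq (1 - \kappa(C))m$, which is immediate from writing a generator matrix for $C$ in systematic form $[I_k \mid P]$ and matching any $x \in \F_2^m$ on its first $k$ information coordinates. For the Zyablov variant, $\kappa(C) \geq \kappa_{\mathsmaller{Z}}(\delta)$ gives $R(C)/m \leq 1 - \kappa_{\mathsmaller{Z}}(\delta)$ and Lemma~\ref{lem:binomial} yields $|B(0, R(C))| \leq 2^{H_2(1 - \kappa_{\mathsmaller{Z}}(\delta))m} \cdot \poly{m}$; for the Varshamov variant, $\kappa(C) \geq 1 - H_2(\delta)$ gives $R(C)/m \leq H_2(\delta)$ and $|B(0, R(C))| \leq 2^{H_2(H_2(\delta))m} \cdot \poly{m}$. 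Enumerating these shifts by increasing Hamming weight and running the $O(n \log n \cdot \poly{m})$ per-shift decoding--sort--compare loop yields the two claimed running times. The main subtlety is verifying that the Zyablov code can indeed be taken $\F_2$-linear without sacrificing the asymptotic rate or the $\poly{m}$ decodability; once this is in hand, the coset-leader observation and the redundancy bound make everything else routine.
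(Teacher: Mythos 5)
Your proof is correct and follows essentially the same route as the paper's: both replace the random choice of $y$ by exhaustive enumeration of $B(0^m, R(C))$, both make the code $\F_2$-linear by concatenating Reed--Solomon with a random Varshamov linear inner code, and both bound $R(C) \leq (1-\kappa(C))m$. The only real divergence is in how correctness of the restricted enumeration is argued. The paper argues directly: pick any $z \in \textsc{mid}$, let $c$ be the codeword within $R(C)$ of $z$ guaranteed by the covering radius, and observe that $y = z+c \in B(0^m, R(C))$ is a good shift by the same triangle-inequality argument as in Theorem~\ref{thm:general_rand}. You instead observe that (for a linear code and translation-covariant unique decoding) the set of good shifts is a union of cosets of $C$, it is nonempty by the counting argument, and each coset contains a leader of weight at most $R(C)$; enumeration by weight therefore hits a good shift. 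Both arguments are sound and rely on the same covering-radius and linearity facts; the paper's is a little more concrete (it exhibits the good shift), while your coset-leader framing is slightly more abstract and perhaps cleaner in that it decouples "where the good shifts live" from the specific geometry of $\textsc{mid}$.
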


\begin{proof}
Let $\delta := \dmin/m$.
It is well-known that for any linear $[m,k,d]_q$-code $C$, the covering radius of $C$ satisfies that
$R(C)\leq m-k$. It follows that for Reed-Solomon code $\RS_{q,m,k}$, $R(\RS)\leq m-k < d$.
We can either generate a random \emph{linear} Varshamov code~\cite{Var57}
similar to that described in Section~\ref{sec:our_codes} that meets the Gilbert-Varshamov bound
and concatenate it with a Reed-Solomon code so that the resulting binary code is a linear code.
Then the covering radius of this concatenated code satisfies that 
$R(C)\leq (1-\kappa_{\mathsmaller{Z}}(\delta))m$.
Or, if preprocessing is allowed, we may simply generate a random linear Varshamov code of block length $m$,
whose covering radius satisfies that $R(C)\leq (1-\kappa_{\mathsmaller{GV}}(\delta))m=H_2(\delta)m$. 

Now the deterministic algorithm for finding the closest pair is similar to the Main Algorithm,
except that instead of picking random shift vector $y$, the algorithm checks every 
$y\in B(0^m, R(C))$. It follows directly that the running time of the algorithm is
$O(n\log{n}\cdot \poly{m} \cdot \mathrm{Vol}(B(0^m, R(C))))$.
Here $\mathrm{Vol}(B(0^m, R(C)))$ denotes the number of vectors within the Hamming ball $B(0^m, R(C))$,
which is $2^{H_{2}(1-\kappa_{\mathsmaller{Z}}(\delta))m}$ for the concatenated code, 
or $2^{H_{2}(H_2(\delta))m}$ for the random Varshamov linear code.

The correctness of the algorithm follows that, by the same argument of the correctness of 
Algorithm~\ref{alg:main}, any vector $z \in \textsc{mid}$ is at most $R(C)$ away from some codeword $c\in C$,
namely $\dist{z, c}=\wt{z+c} \leq R(C)$.
When vector $z+c$, which lies in  $B(0^m, R(C))$, is chosen as the shift vector $y$,
$x_0$ and $x_1$ will be the only two vectors decoded to $c$, therefore the algorithm successfully 
finds the closest pair. 
\end{proof}

We remark that our covering radius argument seems to be too rough, 
as there are many vectors in $\textsc{mid}$. Getting a more efficient deterministic algorithm,
or derandomizing the Main Algorithm is an interesting open question of combinatorics in nature.

%%%%%%%%%%%%%%%%%%%%%%%%%%%%%%%%%%%%%%%%%%%%%%%%%%%%%%%%%%%%%%%%%%%%%%%%%%%%%%%%%%%%%
%%%%%%%%%%%%%%%%%%%%%%%%%%%%%%%%%%%%%%%%%%%%%%%%%%%%%%%%%%%%%%%%%%%%%%%%%%%%%%%%%%%%%
\section{Searching for the Minimum Distance}\label{sec:search}
In this section we show how to remove the assumption that 
the value of $\dmin$ is given to the Main algorithm,
Basically we show that one can use a binary-search like procedure to find $\dmin$ 
without too much slowdown of the Main Algorithm.
Our key observation is that, although the decoding radius is chosen to be $\dmin/2$ in the Main Algorithm,
actually we can relax this requirement: indeed, any decoding radius between $\dmin/2$ and $\dmin$ works.

\begin{lemma}\label{lem:search}
The Main Algorithm works (with worse running time) as long as the binary error correcting code used  
has decoding radius $r =\lfloor \frac{d-1}{2} \rfloor$ satisfying $\frac{1}{2}\dmin \leq r \leq \dmin$.
\end{lemma}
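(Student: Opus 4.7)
\medskip
\noindent\textbf{Proof plan.}
My plan is to follow the structure of the proof of Theorem~\ref{thm:general_rand}, re-examining the two places where the choice $r = \dmin/2$ was critical: (i) the counting of ``good'' shift vectors $y$ that cause both $y+x_0$ and $y+x_1$ to decode into a common codeword, and (ii) the uniqueness argument guaranteeing that no other $x_i$ decodes into that same codeword, which is what ensures adjacency of $x_0$ and $x_1$ in the sorted list. I expect (i) to generalize cleanly and in our favor, and (ii) to be where the ``worse running time'' qualifier really comes in.

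For (i), the same reparametrization used in the proof of Theorem~\ref{thm:general_rand2} applies. A shift $y$ is good whenever there exists a codeword $c$ such that $\dist{y+x_0,c}\leq r$ and $\dist{y+x_1,c}\leq r$; setting $u := y+c$, this amounts to counting pairs $(c,u)$ with $\wt{u+x_0}\leq r$ and $\wt{u+x_1}\leq r$. A direct combinatorial count (assuming WLOG $x_0=0^m$ and letting $S$ be the support of $x_1$) gives $|\{u\}|\geq \binom{\dmin}{\dmin/2}\binom{m-\dmin}{r-\dmin/2}$, which is monotone increasing in $r$, so using a larger $r$ yields at least as many good shifts as $r=\dmin/2$. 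Distinct $(c,u)$ still give distinct $y$'s because the minimum distance $d > 2r$ of the code forces $\wt{c_1+c_2}>2r\geq \wt{u_1+u_2}$, so there is no double counting. Thus the probability that a uniformly random $y$ is good is at least $K\cdot\binom{\dmin}{\dmin/2}\binom{m-\dmin}{r-\dmin/2}/2^m$.

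For (ii), the original triangle-inequality argument forced $\dist{x_0,x_i}\leq 2r$, which gave a contradiction only when $2r\leq \dmin$. With $r$ up to $\dmin$ we can have $\dist{x_0,x_i}\leq 2\dmin$, so in principle many $x_i$ may collide on the same codeword and ``sort together'' with $x_0$ and $x_1$, making adjacency in the sorted list no longer automatic. I would handle this with a small modification of Algorithm~\ref{alg:main}: instead of checking only the $n-1$ adjacent pairs, within each maximal block of consecutive vectors that share the same decoded value, compare all pairs inside the block. With this modification, whenever $y$ is good, $x_0$ and $x_1$ are guaranteed to lie in the same block and hence to be compared, and the algorithm outputs the correct minimum distance.

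The main obstacle is controlling the block-comparison cost. Letting $B_y(c)$ denote the set of input vectors decoding to $c$ under shift $y$, the extra work per iteration is $\sum_c |B_y(c)|^2$. I would bound this by noting that $\mathbb{E}_y\bigl[\sum_c |B_y(c)|^2\bigr]$ is controlled by $n\cdot n\cdot V(m,r)/2^m$, where $V(m,r)$ is the volume of a Hamming ball of radius $r$, since for any fixed pair $x_i,x_j$ the probability that they jointly fall within a common decoding ball is at most $V(m,r)/2^m$. Using Lemma~\ref{lem:binomial} and the fact that $r\leq \dmin$, this quantity is at most $2^{H_2(\delta)m}$ up to polynomial factors, which is absorbable into the exponential factor of the overall running time. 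Combining the two effects (more good shifts, but costlier iterations), the success probability of each random $y$ and the cost per trial both change, and the resulting bound is of the same general form $O(n\log^{2}n\cdot 2^{g(\delta,\,r/m)\,m}\cdot \poly{m})$ for a function $g$ that reduces to the exponent of Theorem~\ref{thm:general_rand} at $r=\dmin/2$ and is strictly larger for $r\in(\dmin/2,\dmin]$, justifying the ``works, with worse running time'' conclusion.
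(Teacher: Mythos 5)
You correctly identify the real obstruction: once $r > \dmin/2$, the triangle inequality only gives $\dist{x_0,x_i}\le 2r$, which need not exceed $\dmin$, so the uniqueness part of the original argument (no third vector decodes to the common codeword) breaks. But your fix misses the idea the paper actually uses, and has two problems of its own. First, you change the algorithm (all-pairs comparisons within blocks of equal decoded values), whereas the lemma asserts that \emph{the Main Algorithm} works unchanged -- this matters because Algorithm~\ref{alg:search} calls Algorithm~\ref{alg:main} as a black box. Second, your expected block-cost bound is off: for a fixed pair $(x_i,x_j)$ the probability over $y$ of a decoding collision is $\sum_{c\in C}\bigl|B(c+x_i,r)\cap B(c+x_j,r)\bigr|/2^m$, a sum over all $K$ codewords, not $V(m,r)/2^m$; since the Hamming bound gives $K\,V(m,r)\le 2^m$, this only yields the trivial $O(n^2)$ bound on the expected extra work, so the claim that it is ``absorbable'' is not justified.

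The paper avoids all of this by shrinking the middle set rather than enlarging it. It replaces $\textsc{mid}$ by $\textsc{mid}'=\textsc{mid}_1\cup\textsc{mid}_2$ with
$\textsc{mid}_1=\{z:\dist{x_0,z}=r,\ \dist{z,x_1}=\dmin-r\}$ and symmetrically for $\textsc{mid}_2$, i.e.\ $z$ lies on a shortest path from $x_0$ to $x_1$, at distance $r$ from one endpoint and $\dmin-r$ from the other. Since $\dmin/2\le r\le\dmin$, both distances are at most $r$, so a shift $y$ with $y+z\in C$ sends both $y+x_0$ and $y+x_1$ inside the decoding ball of $y+z$. And the uniqueness argument survives verbatim, using the \emph{closer} endpoint: if some $y+x_i$ also decoded to $y+z$ then $\dist{x_1,x_i}\le\dist{x_1,z}+\dist{z,x_i}\le(\dmin-r)+r=\dmin$ (or the symmetric bound with $x_0$ for $z\in\textsc{mid}_2$), contradicting that $(x_0,x_1)$ is the unique closest pair. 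Thus $x_0$ and $x_1$ remain the only vectors decoding to that codeword and the sort-and-check step finds them without modification. The price is $|\textsc{mid}'|=\Theta\bigl(\binom{\dmin}{r}\bigr)$, smaller than the two-ball-intersection count you obtain -- which is exactly the ``works, with worse running time'' tradeoff the lemma advertises.
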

\begin{proof}
The proof is similar to the proof of Theorem~\ref{thm:general_rand},
but we slightly generalize the original definition of \textsc{mid} as follows.
Let
\[
\textsc{mid}_1 = \{z\in \{0,1\}^m: \text{$\dist{x_0, z} = r$ and $\dist{x_1, z} = \dmin-r$}\},
\]
let
\[
\textsc{mid}_2 = \{z\in \{0,1\}^m: \text{$\dist{x_1, z} = r$ and $\dist{x_0, z} = \dmin-r$}\}.
\]
and finally let $\textsc{mid}' = \textsc{mid}_1 \cup \textsc{mid}_2$.

Clearly the set $\textsc{mid}'$ is non-empty. 
The key point is that any vector $y$ that shifts some vector $z\in \textsc{mid}'$ to a codeword $c\in C$ 
must be a \emph{good} vector, following a similar argument as in the proof of Theorem~\ref{thm:general_rand}.
The running time of the algorithm can also be calculated similarly.
\end{proof}

\RestyleAlgo{boxruled}
\LinesNumbered
\begin{algorithm}[ht]
\caption{Searching for $\dmin$}\label{alg:search}
	\SetKwData{Left}{left}\SetKwData{This}{this}\SetKwData{Up}{up}
	\SetKwFunction{Decode}{Decode}
	\SetKwInOut{Input}{input}\SetKwInOut{Output}{output}
	\Input{A set of $n$ vectors $x_0,  \ldots, x_{n-1}$ in $\{0,1\}^{m}$}
	\Output{The minimum pairwise distance $\dmin$ }
	\BlankLine
	$r\leftarrow 1$ \\
	\While{true}{
		run Algorithm 2 with $d=r$\;
		\eIf{the minimum distance $\dmin$ returned in Algorithm~\ref{alg:main} is at most $2r$ \\}{
			\Return $\dmin$\;
			}{
			$r\leftarrow (1+\eps)r$\;
			}
		}
\end{algorithm}

Our algorithm for finding $\dmin$ is illustrated in Algorithm~\ref{alg:search}.
The correctness of Algorithm~\ref{alg:search} follows from two simple facts:
first, Algorithm~\ref{alg:main} can never return a value less than $\dmin$;
second, when $\dmin/2 \leq r \leq \dmin$, by Lemma~\ref{lem:search} and Theorem~\ref{thm:general_rand},
Algorithm~\ref{alg:main} returns the correct value of $\dmin$ (with high probability). 

In fact, to make our algorithm more efficient, for any $0<\eps \leq 1$, we can 
search with decoding radius $r = 1, \lfloor (1+\eps) \rfloor, \lfloor (1+\eps)^2 \rfloor, \cdots$.
Note that by Lemma~\ref{lem:search}, the maximum value we will ever try is $(1+\eps)\dmin/2$.
As the running time of Algorithm~\ref{alg:main} is monotone increasing with respect to 
the decoding radius $r$, so in order to bound the running time of searching for $\dmin$,
it suffices to bound the running time of Algorithm~\ref{alg:main} for $r=(1+\eps)\dmin/2$. 
Following a similar analysis as in the proof of Theorem~\ref{thm:general_rand},
\[
|\textsc{mid}'| \geq |\textsc{mid}_1|=\binom{\dmin}{(1+\eps)\dmin/2}.
\]
Therefore,
\begin{align*}
&\Pr(\text{a random $y$ succeeds in finding the closest pair}) \\
\geq & \frac{K\cdot \binom{\dmin}{(1+\eps)\dmin/2}}{2^m} \\
= &\Theta\left(\frac{2^{\kappa_{\mathsmaller{Z}}((1+\eps)\delta) m}2^{H_{2}(\frac{1-\eps}{2})\delta m}}
{\sqrt{\delta m}2^m} \right),
\end{align*}
where in the last step we use bounds in Lemma~\ref{lem:binomial}. As the binary search calls at most $\log_{(1+\eps)}\dmin < \log_{(1+\eps)} m = O(\eps^{-1}\log{m})$ 
times Algorithm~\ref{alg:main}, we therefore have the following theorem.
\begin{theorem}
Let $x_0, x_1, \ldots, x_{n-1}$ in $\{0,1\}^{m}$ be $n$ binary vectors such that 
$x_0$ and $x_1$ is the unique pair achieving the minimum pairwise distance $\dmin$.
Then for any $\eps>0$, there is a randomized algorithm
running in :\\
\[O(\eps^{-1}n\log^{2}n\cdot 
2^{(1-\kappa_{\mathsmaller{Z}}((1+\eps)\delta)-\delta H_{2}(\frac{1-\eps}{2}))m} \cdot \poly{m}).\]
which finds the $\dmin$ (as well as the closest pair $x_0$ and $x_1)$ with probability at least
$1-1/n^2$.
The running time can be improved to :\\
\[O(\eps^{-1}n\log^{2}n\cdot 
2^{(H_{2}((1+\eps)\delta)-\delta H_{2}(\frac{1-\eps}{2}))m} \cdot \poly{m}),\] 
if we are given black-box decoding algorithms for an ensemble of $O(\eps^{-1}\log m)$
binary error correcting codes that meet the Gilbert-Varshamov bound.
\end{theorem}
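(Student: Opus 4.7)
The plan is to stitch together the geometric-search procedure of Algorithm~\ref{alg:search} with the relaxed middle-set analysis already packaged in Lemma~\ref{lem:search}. For correctness, I would first note that Algorithm~\ref{alg:main} only ever reports genuine pairwise Hamming distances of input vectors, so any returned value is at least $\dmin$; hence the outer test ``returned value $\leq 2r$'' is sound. Conversely, once the schedule $r = 1, \lfloor 1+\eps\rfloor, \lfloor(1+\eps)^2\rfloor,\ldots$ first puts $r$ into $[\dmin/2, \dmin]$, Lemma~\ref{lem:search} combined with Theorem~\ref{thm:general_rand} guarantees that the inner call returns exactly $\dmin$ with high probability. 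Since the schedule reaches $\dmin/2$ in at most $\log_{1+\eps}\dmin = O(\eps^{-1}\log m)$ steps, a union bound over these outer iterations preserves the $1-1/n^2$ success guarantee after only a constant blow-up in the per-call repetition count.

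For the running time, I would invoke monotonicity: the cost of Algorithm~\ref{alg:main} is non-decreasing in the decoding radius $r$ (larger $r$ forces a lower-rate code and thus a smaller per-shift success probability), so the total work is at most $O(\eps^{-1}\log m)$ times the cost of the \emph{final} call, which uses $r = (1+\eps)\dmin/2$. It therefore suffices to bound this worst-case invocation.

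The per-call bound is the one piece that actually requires computation. Using the enlarged middle set from Lemma~\ref{lem:search}, any $z \in \textsc{mid}_1$ --- a vector at distance $r$ from $x_0$ and $\dmin - r$ from $x_1$ --- shifted onto any codeword of $C$ yields a good shift $y$, so the number of good $y$'s is at least $K \cdot \binom{\dmin}{(1+\eps)\dmin/2}$. Applying Lemma~\ref{lem:binomial} gives $\binom{\dmin}{(1+\eps)\dmin/2} = \Theta(2^{H_{2}((1-\eps)/2)\delta m}/\sqrt{\delta m})$, and inserting either $K = 2^{\kappa_{\mathsmaller{Z}}((1+\eps)\delta) m}$ (for the concatenated Zyablov codes of Section~\ref{sec:our_codes}) or $K = 2^{H_{2}((1+\eps)\delta)m}$ (under the GV-oracle assumption) produces the required per-shift success probability; multiplying its reciprocal by the per-shift cost $O(n\log n \cdot \poly{m})$ and the $O(\eps^{-1}\log m)$ outer calls yields exactly the two claimed running-time bounds.

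The main obstacle I anticipate is confirming that good shifts under the \emph{relaxed} radius $r \in [\dmin/2, \dmin]$ still isolate $\{x_0, x_1\}$ from every other input vector: this is precisely the content of Lemma~\ref{lem:search}, proved by the same triangle-inequality separation argument used in Theorem~\ref{thm:general_rand}, and relies crucially on the unique-minimizer hypothesis on $(x_0, x_1)$. Everything else --- geometric-schedule counting, monotone reduction to the worst radius, and the binomial estimate --- is routine bookkeeping that only needs to be assembled carefully.
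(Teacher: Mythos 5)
Your proposal reproduces the paper's argument step for step: the geometric schedule of Algorithm~\ref{alg:search}, the soundness observation that Algorithm~\ref{alg:main} never reports a value below $\dmin$, the completeness via Lemma~\ref{lem:search} once $r\in[\dmin/2,\dmin]$, the monotone reduction to the worst call at $r=(1+\eps)\dmin/2$, the lower bound $|\textsc{mid}'|\geq|\textsc{mid}_1|\geq\binom{\dmin}{(1+\eps)\dmin/2}$, and the binomial estimate. One slip worth fixing: for a GV-meeting code of relative distance $(1+\eps)\delta$ the codebook size is $K=2^{\kappa_{\mathsmaller{GV}}((1+\eps)\delta)m}=2^{(1-H_2((1+\eps)\delta))m}$, not $2^{H_2((1+\eps)\delta)m}$; the factor $H_2((1+\eps)\delta)$ appearing in the claimed exponent comes from $1-\kappa_{\mathsmaller{GV}}((1+\eps)\delta)$ once you take the reciprocal of the per-shift success probability $K\binom{\dmin}{(1+\eps)\dmin/2}/2^m$, not from $K$ itself.
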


%%%%%%%%%%%%%%%%%%%%%%%%%%%%%%%%%%%%%%%%%%%%%%%%%%%%%%%%%%%%%%%%%%%%%%%%%%%%%%%%%%%%%
%%%%%%%%%%%%%%%%%%%%%%%%%%%%%%%%%%%%%%%%%%%%%%%%%%%%%%%%%%%%%%%%%%%%%%%%%%%%%%%%%%%%%
\section{The Light Bulb Problem}\label{sec:light_bulb}
In this section, we apply our new algorithms for the Closest Pair Problem to a special case of it,
namely the \emph{Light Bulb Problem}. 

In the Light Bulb Problem, we are given $n$ sequences of bit strings $X_0, X_1, \ldots,$ $ X_{n-1}$.
All bits are generated independently, uniformly at random from $\{0,1\}$, except that
two strings, say $X_0$ and $X_1$, are generated with non-zero linear correlation $\rho$;
that is, independently for each $i$, $\Pr((X_0)_i = (X_1)_i)=\frac{1+\rho}{2}$
and $\Pr((X_0)_i \neq (X_1)_i)=\frac{1-\rho}{2}$. 
The problem is to find this correlated pair of sequences.

First note that we may assume the Pearson correlation $\rho$ is positive, 
as there is a simple randomized reduction 
from the negative $\rho$ case to the positive $\rho$ case:
given an instance of the Light Bulb Problem with $\rho<0$
randomly pick $n/2$ sequences and flip all the bits in these sequences.
Then with probability $1/2$, the correlated pair become $-\rho$ correlated.

To apply our algorithms for the Closest Pair Problem to the Light Bulb Problem,
the following standard result\footnote{This is a folklore bound. Indeed, similar analyses can be
found in earlier work, although sometimes with slightly different tools (e.g.,
Hoeffding bound in place of Chernoff bound), but essentially they all
aim to show that with a sufficiently high dimension, the planted
correlation is unique, with high probability.} provides a randomized reduction from the latter to the former.
We include a proof for completeness and to justify our choice of the dimension $m$.

\begin{theorem}\label{thm:sample_size}
If we pick $m=\frac{4\ln2\cdot \log n}{\rho^2}(1+o(1))$ bits at random from $X_0, X_1, \ldots,$ $ X_{n-1}$
to obtain $n$ vectors $x_0, x_1, \ldots, x_{n-1}$ in $\{0,1\}^m$,
then with constant probability, $x_0$ and $x_1$ is the \emph{unique}
closest pair among these $n$ vectors.
\end{theorem}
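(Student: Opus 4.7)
The plan is to show that, after restricting to $m$ random coordinates of the $X_i$'s, the planted pair's Hamming distance is strictly smaller than all $\binom{n}{2}-1$ other pairwise distances with constant probability. For any single sampled position $k$, the two bits $(X_0)_k,(X_1)_k$ disagree with probability $(1-\rho)/2$, while for every other index pair $(X_i,X_j)$ the bits disagree with probability exactly $1/2$ (since at least one of the two sequences is independent and uniform). Consequently $\dist{x_0,x_1}\sim\mathrm{Bin}(m,(1-\rho)/2)$ has mean $(1-\rho)m/2$, while every other $\dist{x_i,x_j}\sim\mathrm{Bin}(m,1/2)$ has mean $m/2$, giving an expected gap of $\rho m/2$ to exploit.

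I would then pick a threshold $t$ and define the good event that $\dist{x_0,x_1}\le t$ while $\dist{x_i,x_j}>t$ for every other pair. The threshold must be placed asymmetrically: since there are $\binom{n}{2}$ non-planted pairs but only one planted pair, I would set $t/m = 1/2 - \rho/2 + \eps_m$ with $\eps_m\to 0$ slowly enough that $\eps_m\sqrt m\to\infty$ (for instance $\eps_m = m^{-1/3}$). This places $t$ almost at the planted mean so that the full gap $\rho/2$ is available in the \emph{lower}-tail exponent applied to the $n^2$ random pairs, at the cost of only a vanishing cushion above the planted mean.

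The upper tail for the planted pair then succumbs to Hoeffding, $\Pr[\dist{x_0,x_1}>t]\le\exp(-2\eps_m^2 m)=o(1)$. For each non-planted pair I would apply the sharp Chernoff--KL bound
\[
\Pr[\mathrm{Bin}(m,1/2)\le t]\le 2^{-m(1-H_2(t/m))},
\]
and union-bound over the $\binom{n}{2}$ pairs. Combined, the failure probability is at most $n^2\cdot 2^{-m(1-H_2(1/2-\rho/2))(1+o(1))}+o(1)$. Invoking the Taylor expansion $1-H_2(1/2-x)=\frac{2x^2}{\ln 2}+O(x^4)$ at $x=\rho/2$ rearranges the condition ``failure probability $o(1)$'' into precisely $m\ge\frac{4\ln 2\cdot \log n}{\rho^2}(1+o(1))$, matching the statement of the theorem.

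The main obstacle, and the reason this is not just ``apply Chernoff and be done'', is squeezing out the sharp constant $4\ln 2$. A symmetric threshold at the midpoint $(1-\rho/2)m/2$ together with Hoeffding would give the weaker constant $16\ln 2$; obtaining $4\ln 2$ requires both the asymmetric placement of $t$ and the Chernoff--KL exponent rather than Hoeffding's quadratic bound. The only delicate step is choosing $\eps_m$ small enough that the Taylor estimate is tight but large enough that the planted-pair upper tail is negligible, for which $\eps_m=m^{-1/3}$ suffices.
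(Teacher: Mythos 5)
Your proof is correct and arrives at the same constant, but it takes a slightly different route from the paper for the planted pair. The paper places the threshold exactly at the planted pair's mean $d_t = (1-\rho)m/2$ and then invokes the fact (Kaas--Buhrman) that a binomial $B(n,p)$ has median $\lfloor np\rfloor$ or $\lceil np\rceil$, so $\Pr[\dist{x_0,x_1} \ge d_t] \le 1/2$ immediately, with no asymptotics. For the non-planted pairs it then applies the Alon--Spencer Chernoff bound (essentially Hoeffding) at the full deviation $\rho m/2$, exactly as you do. Your approach instead nudges the threshold to $t/m = (1-\rho)/2 + \eps_m$ and pays for the planted pair with Hoeffding, which requires the $\eps_m = m^{-1/3}$ bookkeeping and the Taylor-expansion argument. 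Both work; the median lemma is cleaner because it removes the need to reason about how fast $\eps_m$ may decay and whether the quadratic Taylor term dominates.

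One small inaccuracy in your commentary: you assert that obtaining the constant $4\ln 2$ requires the Chernoff--KL exponent ``rather than Hoeffding's quadratic bound.'' This is not so. Hoeffding applied at the full deviation $\rho m/2$ gives $\Pr[\mathrm{Bin}(m,1/2) \le (1-\rho)m/2] \le \exp(-2(\rho/2)^2 m) = 2^{-m\rho^2/(2\ln 2)}$, which already yields $m \ge \frac{4\ln 2\cdot\log n}{\rho^2}(1+o(1))$; indeed the Chernoff--KL exponent $1-H_2(1/2-\rho/2)$ and the Hoeffding exponent $\rho^2/(2\ln 2)$ agree to leading order via exactly the Taylor expansion you cite. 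What is actually responsible for getting $4\ln 2$ rather than $16\ln 2$ is the asymmetric threshold placement (at or just above the planted mean, not at the midpoint), which your argument identifies correctly; the choice of tail bound is immaterial here. The paper's use of the median fact is precisely the trick that lets one put the threshold at the planted mean itself without any upper-tail argument at all.
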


\begin{proof}
For each pair of vectors $x_i$ and $x_j$, $0\leq i < j \leq n-1$,
define $m$ indicator random variables $\{(I_{i,j})_k\}_{k\in[m]}$
such that $(I_{i,j})_k = 1$ if and only if $(x_i)_k \neq (x_j)_k$.
Note that for any pair $i<j$, $\{(I_{i,j})_k\}_{k\in[m]}$ are $m$
independent and identically distributed random variables, and
$\dist{x_i, x_j}=\sum_{k\in [m]}(I_{i,j})_k$.
Specifically, $\Pr((I_{0,1})_k = 0) = \frac{1+\rho}{2}$
and $\Pr((I_{0,1})_k = 1) = \frac{1-\rho}{2}$;
and $\Pr((I_{i,j})_k = 0) = \Pr((I_{i,j})_k = 1) = 1/2$ for all other $i<j$ pairs.

Note that each pairwise distance $\dist{x_i,x_j}$ is a \emph{binomial} random variable.
In particular, $\dist{x_0, x_1}$ is a $B(m, \frac{1-\rho}{2})$ random variable
and all others are $B(m,1/2)$ random variable. To argue about the distribution of distance between $x_0$ and $x_1$, we need the following fact:
\begin{fact}[\cite{KB80}]\label{fact:binom} 
Binomial distribution $B(n,p)$ has median $\left \lfloor np \right \rfloor$ or $\left \lceil np \right \rceil$.
\end{fact}
\noindent Let $d_t := \Exp(\dist{x_0, x_1})= (1-\rho)m/2$.
Then by Fact~\ref{fact:binom}, $\Pr(\dist{x_0, x_1} \geq d_t)\leq 1/2$.

On the other hand, for any other pair $x_i$ and $x_j$,
\begin{align*}
\Pr(\dist{x_i, x_j} < d_t)
&= \Pr\left(\dist{x_i, x_j}< \Exp(\dist{x_i, x_j})-\rho m/2 \right) \\
&< e^{-(\rho m)^{2}/2m}=e^{-m\rho ^{2}(1-o(1))/2} \\
&\leq \frac{1}{2n^2},
\end{align*}
by a simple application of the Chernoff bound (e.g. Theorem A.1.1 in~\cite{AS08}).
Now applying a union bound over all $x_i$ and $x_j$ pairs,
we have that with probability at least $1/4$,
$\dist{x_0, x_1} < d_t$ and for all other pairs $\dist{x_i, x_j}\geq d_t$,
i.e., $x_0$ and $x_1$ is the unique closest pair among these $n$ vectors.
\end{proof}

Note that Theorem~\ref{thm:sample_size} implies that if we sample 
$m=\frac{4\ln2\cdot \log n}{\rho^2}(1+o(1))$ bits from the $n$ random sequences,
then with constant probability, we get an instance of the Closest Pair Problem with 
$\dmin < (1-\rho)m/2$.
Now, by repeatedly running our randomized algorithm for Closest Pair Problem $O(\log{n})$ times,
each time taking independent samples from the input vectors, and then take a majority vote, then
by combining a simple application of the Chernoff bound, Theorem~\ref{thm:general_rand} and Theorem~\ref{thm:sample_size}, we obtain the following
\begin{theorem}
There is a randomized algorithm for the Light Bulb Problem which runs in time
\[
O(n\cdot \poly{\log n})\cdot 2^{(1-\kappa_{\mathsmaller{Z}}(\frac{1-\rho}{2})-\frac{1-\rho}{2})
\frac{4\ln2\cdot \log n}{\rho^2}(1+o(1))}
\]
and succeeds with probability at least
$1-1/n^2$. 
The running time can be further improved to 
\[
O(n\cdot \poly{\log n})\cdot 2^{(H_{2}(\frac{1-\rho}{2})-\frac{1-\rho}{2})
\frac{4\ln2\cdot \log n}{\rho^2}(1+o(1))},
\]
if we are allowed a one-time preprocessing time of $n^{2.773/\rho^2}$ to generate the decoding
lookup table of a random Gilbert's $(m,K,(1-\rho)m/2)$-code.
\end{theorem}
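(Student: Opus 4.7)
The plan is to combine the dimension-reduction step in Theorem~\ref{thm:sample_size} with our Closest Pair algorithms from Section~\ref{sec:main}, and then amplify the overall success probability by independent repetition.

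First I would reduce the Light Bulb instance to a Closest Pair instance by sampling $m = \frac{4\ln 2 \cdot \log n}{\rho^2}(1+o(1))$ random coordinates from each of the sequences $X_0, \ldots, X_{n-1}$ to obtain vectors $x_0, \ldots, x_{n-1} \in \{0,1\}^m$. By Theorem~\ref{thm:sample_size}, with constant probability the planted pair $x_0, x_1$ is the unique closest pair, and in that event $\dmin < (1-\rho)m/2$, so $\delta = \dmin/m < (1-\rho)/2$. Since $\kappa_{\mathsmaller{Z}}(\cdot)$ is monotone decreasing, the exponent $(1 - \kappa_{\mathsmaller{Z}}(\delta) - \delta)m$ is (up to $o(1)$-terms) maximized at $\delta = (1-\rho)/2$, which is the target bound we want.

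Next I would invoke the Main Algorithm via Theorem~\ref{thm:general_rand}. A mild subtlety is that the Main Algorithm is stated as needing $\dmin$ as input, but since we only care about an upper bound on $m$, we can either guess $\dmin$ by the search procedure from Section~\ref{sec:search} (at polylogarithmic overhead), or, more simply, run Theorem~\ref{thm:general_rand} with a decoding radius corresponding to the threshold $(1-\rho)m/2$ and appeal to Lemma~\ref{lem:search}, which confirms that any decoding radius in $[\dmin/2, \dmin]$ works. Plugging $\delta = (1-\rho)/2$ and the chosen $m$ into the running time of Theorem~\ref{thm:general_rand} yields precisely
\[
O(n \cdot \poly{\log n}) \cdot 2^{(1-\kappa_{\mathsmaller{Z}}(\frac{1-\rho}{2}) - \frac{1-\rho}{2})\cdot \frac{4\ln 2 \cdot \log n}{\rho^2}(1+o(1))},
\]
as the $\poly{m}$ and $\log^2 n$ factors are absorbed into $\poly{\log n}$.

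Finally I would boost the success probability. Each independent sampling round succeeds with constant probability by Theorem~\ref{thm:sample_size}, and conditioned on a good sample the Closest Pair algorithm of Theorem~\ref{thm:general_rand} succeeds with probability at least $1-1/n^2$. Repeating the whole procedure $O(\log n)$ times with fresh samples and taking the pair that appears in a majority of runs (or simply the pair whose original Hamming correlation, computed on the full sequences, is largest among the candidates returned) drives the failure probability below $1/n^2$ by a standard Chernoff bound; this only multiplies the running time by a $\polylog n$ factor. For the improved bound, I would replace the Zyablov-based code by a Gilbert $(m, K, (1-\rho)m/2)$-code, whose decoding table can be precomputed in time $O(2^m) = n^{O(1/\rho^2)}$; since the block length is $m = 4\ln 2 \cdot \log n/\rho^2 < 2.773 \log n/\rho^2$, this preprocessing cost matches the claimed $n^{2.773/\rho^2}$, and the exponent in the running time becomes $(H_2(\frac{1-\rho}{2}) - \frac{1-\rho}{2})m$ by the improved version of Theorem~\ref{thm:general_rand}. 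The only real point requiring care is verifying that the sampling step produces an input satisfying the uniqueness hypothesis of Theorem~\ref{thm:general_rand}, which is exactly what Theorem~\ref{thm:sample_size} guarantees; the rest is bookkeeping of constants in the exponent.
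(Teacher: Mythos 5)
Your proposal is correct and follows essentially the same route as the paper: reduce via the coordinate-sampling bound (Theorem~\ref{thm:sample_size}), invoke the Closest Pair algorithm (Theorem~\ref{thm:general_rand}) with $\delta \leq (1-\rho)/2$, and amplify by $O(\log n)$ independent repetitions with a majority vote. You additionally flag and resolve the subtlety that $\dmin$ is not given in advance (via the search procedure or Lemma~\ref{lem:search}), which the paper glosses over, so your write-up is if anything a bit more careful than the original.
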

Numerical calculations show that our new algorithm performs
better than the improved Valiant's fast matrix multiplication algorithm~\cite{KKK16}
(which runs in $n^{1.582}$) when 
$\rho \geq 0.9967$ (equivalently when $\delta\leq 0.00165$). 
Moreover, if an $n^{2.773/\rho^2}$-time preprocessing is allowed, then 
our algorithm runs faster for all $\rho \geq 0.9310$ (equivalently for all $\delta \leq 0.0345$).

\medskip
\paragraph{Deterministic algorithm.}
Following~\cite{KKKO16}, we say a deterministic algorithm solves the Light Bulb Problem if it is correct \emph{on almost all instances}, i.e., if the algorithm fails on a randomly picked instance with probability at most $1/\poly{n}$.
Following a similar proof that of the randomized algorithm shown before, we have the following theorem on deterministic algorithm for the Light Bulb Problem 
\begin{theorem}
There is a deterministic algorithm for the Light Bulb Problem which runs in time
\[
O(n\cdot \poly{\log n})\cdot 2^{H_{2}(1-\kappa_{\mathsmaller{Z}}(\frac{1-\rho}{2}))\frac{4\ln2\cdot \log n}{\rho^2}(1+o(1))}
\]
and succeeds with probability at least
$1-1/n^2$. 
The running time can be further improved to 
\[
O(n\cdot \poly{\log n})\cdot 2^{H_{2}(H_{2}(\frac{1-\rho}{2}))\frac{4\ln2\cdot \log n}{\rho^2}(1+o(1))},
\]
if we are allowed a one-time preprocessing time of $n^{2.773/\rho^2}$ to generate the decoding
lookup table of a random Gilbert's $(m,K,(1-\rho)m/2)$-code.
\end{theorem}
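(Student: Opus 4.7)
The plan is to mimic the randomized reduction to the Closest Pair Problem and then invoke the deterministic Closest Pair algorithm of Theorem~\ref{thm:general_det} in place of its randomized counterpart. Since the deterministic algorithm is defined to be correct on almost all instances (i.e.\ with probability $1-1/\poly{n}$ over the input distribution), the only source of randomness in the analysis is the Light Bulb instance itself, not any internal coin tosses of the algorithm.

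First I would fix the dimension $m = \frac{4\ln 2 \cdot \log n}{\rho^2}(1+o(1))$, with the hidden $o(1)$ factor chosen slightly larger than in Theorem~\ref{thm:sample_size} so that a two-sided Chernoff bound combined with a union bound over the $\binom{n}{2}$ pairs gives the following event with probability at least $1-1/n^2$ over the random Light Bulb instance: restricting each input sequence to any fixed $m$-bit subset yields a Closest Pair instance in which $x_0$ and $x_1$ form the unique closest pair, with $\dmin < (1-\rho)m/2$ strictly below every other pairwise distance. Because $m = \Theta(\log n /\rho^2)$, boosting the constant success probability of Theorem~\ref{thm:sample_size} all the way to $1-1/n^2$ requires inflating $m$ only by a $(1+o(1))$ factor, which is absorbed into the $(1+o(1))$ of the final exponent.

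On such a good instance I would then run the deterministic Closest Pair algorithm of Theorem~\ref{thm:general_det} with $\delta=(1-\rho)/2$. Using the Zyablov-meeting concatenated code gives running time $O(n\log n \cdot 2^{H_{2}(1-\kappa_{\mathsmaller{Z}}(\delta))m}\cdot\poly{m})$, while the variant that is allowed black-box access to the decoding table of a random Gilbert code (constructed in time $O(2^m)=n^{2.773/\rho^2}$ as guaranteed by Lemma~\ref{lem:GV-code}) gives $O(n\log n \cdot 2^{H_{2}(H_{2}(\delta))m}\cdot\poly{m})$. Substituting the choice of $m$ and $\delta$ into these two exponents yields exactly the two running times claimed, and since the only failure mode is the sampling step, the algorithm succeeds on a $1-1/n^2$ fraction of inputs, matching the notion of solving the Light Bulb Problem used in~\cite{KKKO16}.

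The main thing to watch out for is that Theorem~\ref{thm:general_det} nominally requires the exact value of $\dmin$, whereas after the sampling step we only know that $\dmin$ is concentrated around $(1-\rho)m/2$. I would handle this either by feeding the algorithm the slightly inflated value $\lceil(1-\rho)m/2\rceil$ and appealing to Lemma~\ref{lem:search} to show that the resulting decoding radius still lies in the admissible interval $[\dmin/2,\dmin]$, or by invoking the search procedure of Section~\ref{sec:search} at the cost of an extra $\poly\log n$ factor that is already absorbed into the $\poly{\log n}$ in the theorem. Thus the only real obstacle is bookkeeping: one must check that the $(1+o(1))$ slacks arising in the sample size, in the guessed value of $\dmin$, and hence in the decoding radius all combine into a single $(1+o(1))$ factor in the final exponent, which is routine.
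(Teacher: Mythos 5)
Your proposal diverges from the paper's proof in one crucial step, and that step contains a genuine gap. The paper (reading the randomized case it tells us to mimic, together with the remark that the deterministic algorithm ``also needs to draw $O(m\log n)$ bits'') keeps $m=\frac{4\ln 2\cdot\log n}{\rho^2}(1+o(1))$, accepts the merely-constant success probability of Theorem~\ref{thm:sample_size}, and amplifies to $1-1/n^2$ by running the deterministic Closest Pair algorithm on $O(\log n)$ disjoint $m$-bit blocks of the input sequences and combining the outputs (majority vote / verifying candidate pairs against the long sequences). Because the input bits are i.i.d.\ random, these $O(\log n)$ deterministically chosen blocks behave as independent trials, which is exactly the observation that makes a deterministic algorithm meaningful ``on almost all instances.'' You instead propose to make a \emph{single} $m$-bit sample succeed with probability $1-1/n^2$ by inflating $m$ by a $(1+o(1))$ factor and replacing the median argument with a two-sided Chernoff bound. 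This does not work: to get $\Pr\bigl(\dist{x_0,x_1}\ge d_t\bigr)\le 1/\poly{n}$ you must take the threshold $d_t$ to be $(1-\rho)m/2+\alpha m$ with $\alpha=\Omega\bigl(\sqrt{(\log n)/m}\bigr)$; since $m=\Theta\bigl((\log n)/\rho^2\bigr)$ this forces $\alpha=\Omega(\rho)$, i.e.\ a \emph{constant fraction} of the total gap $\rho/2$ between the two means. Balancing the two Chernoff bounds then requires $m$ to be a constant factor (roughly $4(1+\sqrt 2)^2$, depending on the exact Chernoff form) larger than $\frac{4\ln 2\cdot\log n}{\rho^2}$, and also pushes the effective relative distance $\delta$ above $(1-\rho)/2$; both effects strictly worsen the exponent, so you cannot absorb them into the advertised $(1+o(1))$. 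In short, the missing idea is the $O(\log n)$-fold repetition over disjoint blocks using the input's own randomness; a one-shot sample with the stated $m$ has at most constant success probability and a $(1+o(1))$ inflation cannot fix it. Your handling of the unknown $\dmin$ via Lemma~\ref{lem:search} or Section~\ref{sec:search} is fine and matches what the paper implicitly does.
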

Note that, like the randomized algorithm, our deterministic algorithm also needs to draw $O(m\log{n})$ bits from each of the $n$ sequences. However, the algorithm uses no random bits and the success probability is over the random instance the algorithm gets from the input.

As mentioned earlier, Alman~\cite{Alm19} gave the currently best deterministic algorithm for the Light Bulb Problem,
which runs in $O(n^{1.582})$ time. 
Unsurprisingly, the deterministic version of our algorithm outperforms the one in~\cite{Alm19} when the Pearson
correlation is very large. 
Specifically, by numerical calculation, our deterministic algorithm runs faster than Alman's when $\rho \geq 0.999948$.
Moreover, if an $n^{2.773/\rho^2}$-time preprocessing is allowed, then 
when $\rho \geq 0.9933$ (equivalently when $\delta \leq 0.0033$),
we may take the vector length $m\leq 2.8101 \log n$ so that 
our deterministic algorithm runs in at most $O(n^{1.581})$.

%%%%%%%%%%%%%%%%%%%%%%%%%%%%%%%%%%%%%%%%%%%%%%%%%%%%%%%%%%%%%%%%%%%%%%%%%%%%%%%%%%%%%

%%%%%%%%%%%%%%%%%%%%%%%%%%%%%%%%%%%%%%%%%%%%%%%%%%%%%%%%%%%%%%%%%%%%%%%%%%%%%%%%%%%%%
%%%%%%%%%%%%%%%%%%%%%%%%%%%%%%%%%%%%%%%%%%%%%%%%%%%%%%%%%%%%%%%%%%%%%%%%%%%%%%%%%%%%%
\section{Concluding Remarks and Open Problems}\label{sec:conclusion}
We propose a simple approach, namely a decoding-base method, to solve the classic Closest Pair Problem.
Our results leave open several interesting questions. The way we derandomize our randomized algorithm is by a simple brute-force search. Is there a smarter and more efficient way to derandomize? 
Valiant's fast matrix multiplication method~\cite{Val15} for the Light Bulb Problem 
is the only known algorithm that makes good use of the availability of larger amount of data. 
Is it possible to leverage the data size to improve
the running time of our decoding approach?
Another interesting open question is to study the Closest Pair Problem in the streaming model,
as many real-life situations of the problem --- such as in cyber security --- 
are in fact in this setting.

\section*{Acknowledgements}
We are most grateful to the anonymous referees for their detailed and invaluable comments and suggestions. We would like to thank Karthik C.S. for his comments.

\bibliographystyle{plain}

\end{document}